\documentclass[11pt]{amsart}

\usepackage[english]{babel}
\usepackage{graphicx, amsfonts,amsmath,amssymb,amsthm}
\usepackage{fullpage}
\usepackage{comment}
\usepackage[authoryear,sort]{natbib}
\usepackage{algorithm}
\usepackage{algorithmic}
\usepackage{supertabular}
\usepackage{color}

\newenvironment{probleme}[1]{\vskip 0.2 cm\noindent{\bf {\sc
#1}}}{\vspace*{0.2 cm}}
\newcommand{\instanceProblem}{\newline\noindent{\bf Instance.~}}

\newcommand{\solutionProblem}{\newline\noindent{\bf Solution.~}}
\newcommand{\measureProblem}{\newline\noindent{\bf Measure.~}}

\newenvironment{outdent}
{\begin{list}{}{\leftmargin-2cm\rightmargin\leftmargin}\centering\item\relax}
{\end{list}\ignorespacesafterend}

\theoremstyle{plain}
\newtheorem{theo}{Theorem}
\newtheorem{prop}[theo]{Proposition}

\theoremstyle{remark}

\newtheorem{re}{Remark}

\def\def\IPEfile{#}\input{#}1{\def\IPEfile{#1}\input{#1}}

\def\Z{\mathbb{Z}}
\def\P{\mathbb{P}}
\def\E{\mathbb{E}}
\def\riskMeas{\rho}
\def\SSPP{\textsc{Stochastic $\riskMeas$-Shortest Path Problem}}
\def\SRCSPP{\textsc{Stochastic $\riskMeas$-Resource Constrained Shortest Path Problem}}
\def\SOTAP{\textsc{Stochastic On Time Arrival Problem}}

\newcommand{\lab}{\lambda}
\newcommand{\labB}{\mu}
\newcommand{\labList}{L}

\newcommand{\NP}{$\textup{NP}$}
\newcommand{\leqst}{\leq_{st}} 
\newcommand{\nleqst}{\nleq_{st}}
\newcommand{\meet}{\wedge} 
\newcommand{\bigmeet}{\bigwedge}
\newcommand{\join}{\vee} 



\author{Axel Parmentier}

\address{A. Parmentier, Universit\'e Paris Est, CERMICS, 6-8 avenue Blaise Pascal, Cit\'e Descartes, 77455 Marne-la-Vall\'ee, Cedex 2, France}
\email{axel.parmentier@cermics.enpc.fr}

\author{Fr\'ed\'eric Meunier}

\address{F. Meunier, Universit\'e Paris Est, CERMICS, 6-8 avenue Blaise Pascal, Cit\'e Descartes, 77455 Marne-la-Vall\'ee, Cedex 2, France}
\email{frederic.meunier@enpc.fr}

\title{Stochastic shortest paths and risk measures}

\keywords{On time arrival, random travel time, risk measure, resource constraint, shortest path, usual stochastic order}

\begin{document}

\begin{abstract}

We consider three shortest path problems in directed graphs with random arc lengths.

For the first and the second problems, a risk measure is involved. While the first problem consists in finding a path minimizing this risk measure, the second one consists in finding a path minimizing a deterministic cost, while satisfying a constraint on the risk measure. We propose algorithms solving these problems for a wide range of risk measures, which includes among several others the $CVaR$ and the probability of being late. Their performances are evaluated through experiments.

One of the key elements in these algorithms is the use of stochastic lower bounds that allow to discard partial solutions. Good stochastic lower bounds are provided by the so-called \SOTAP{}. This latter problem is the third one studied in this paper and we propose a new and very efficient algorithm solving it.

Complementary discussions on the complexity of the problems are also provided.

\end{abstract}

\maketitle

\section{Introduction}

\subsection{Context}Finding shortest paths in networks has numerous applications. There are direct applications, e.g. organizing the routes of a fleet in logistics or finding the best trajectory for an airplane, but there are also less obvious applications, such as the ones arising as subproblems in column generation approaches.

As for other optimization problems, there is currently a need for taking into account uncertainty in shortest path problems, both from theoretical and practical point of views. Taking into account the uncertainties allows a better optimization. For instance, congestion in road networks has a strong influence on travel times, and modeling delay is crucial when on time arrival is required. Several works have already been carried out on this topic -- see our literature review in Section~\ref{sec:literature} -- but as far as we know, none deal with the problems expressed in the general form we address in the present paper.

\subsection{Problems}We consider various shortest path problems with random arc lengths. They all have in common the following elements: a directed graph $D=(V,A)$ without loops, two vertices $o,d\in V$ that are respectively the origin and the destination, and independent random variables $(X_a)_{a\in A}$ that model random travel times.  The objective is always to find the best $o$-$d$ path. `Best' can take various meanings and we adopt a versatile approach.

The most natural case consists probably in taking no additional constraints into account. In order to be as general as possible, we study the problem of minimizing a given {\em risk measure} of the sum of the $X_a$'s along the path. A risk measure is a mapping from a set of random variables to the set of real numbers. The mean, the variance, or the probability of being above some threshold are examples of risk measures. Another risk measure is the conditional value at risk and is well known in finance but maybe less in operations research. It has many interesting features. Details are given later in the paper. 

Given a risk measure $\riskMeas$, we define the following problem.
 
\begin{probleme}{}\SSPP
\instanceProblem{A directed graph $D = (V,A)$, two vertices $o,d\in V$, independent random variables $(X_a)_{a\in A}$.}
\solutionProblem{An elementary $o$-$d$ path $P$.}
\measureProblem{The quantity $\riskMeas(\sum_{a\in P}X_a)$.}
\end{probleme}

\smallskip


When the $X_a$ are deterministic and $\riskMeas$ is linear, we obtain the usual (deterministic) shortest path problem.\\

Another way to take into account stochasticity when modeling an optimization problem consists in introducing probabilistic constraints. Still assuming given a risk measure $\riskMeas$, we can define the following problem.

\begin{probleme}{}\SRCSPP
\instanceProblem{A directed graph $D = (V,A)$, two vertices $o,d\in V$,  independent random variables $(X_a)_{a\in A}$, a nonnegative number $\riskMeas_0$, and nonnegative numbers $(c_a)_{a\in A}$.}
\solutionProblem{An elementary  $o$-$d$ path $P$ such that
$\riskMeas(\sum_{a\in P}X_a)\leq \riskMeas_0$.}
\measureProblem{The quantity $\sum_{a\in P}c_a$.}
\end{probleme}

\smallskip


When the $X_a$ are deterministic and $\riskMeas$ is linear, we obtain the usual (deterministic) resource constrained shortest path problem.

\subsection{Assumption on the travel times $X_a$}\label{subsec:assump}

All random variables $X_a$, which model the time needed to traverse an arc $a$, are assumed to take their values in $\Z_+$ and to have finite supports. In a computational perspective, it is a natural assumption. When complexity features are discussed, the variable $X_a$ is described by its distribution, which is assumed to be encoded as its support and the probability of each value in the support. 

\subsection{Contributions}

We propose an algorithm for each of the aforementioned problems when the arc lengths $X_a$ satisfy the assumption above and when the risk measure $\riskMeas$ penalizes delay. The mean, the probability of being above some threshold, and the conditional value at risk are examples of such risk measures, but not the variance. ``Penalizing delay'' will be mathematically specified with the help of the so-called usual stochastic order, which is an order defined on the set of random variables -- see Section~\ref{sec:model}. However, our approach remains completely natural. The efficiency of our algorithms, which turns out to be high, is proved by an extensive experimental study. 

One of the main ideas used in these algorithms consists in attaching to each vertex $v$ of the graph a random variable $Z_v$ that is a lower bound of the length of a $v$-$d$ path. ``Lower bound'' is here to be understood in the sense of  the usual stochastic order. As usual with lower bounds in shortest paths algorithms, these stochastic lower bounds allow to discard a priori some partial paths and to reduce the space of possible solutions.

 It turns out that good lower bounds are solutions of another problem, namely the \SOTAP. There are already efficient algorithms solving this problem, but we propose a new one, which seems to be very efficient.
 
 The complexity of the problems dealt with in the paper are also discussed.

\subsection{Plan}

We start with a literature review (Section~\ref{sec:literature}). We describe in Section~\ref{sec:model} the usual stochastic order and the various risk measures covered by our work. The stochastic lower bounds used in our algorithms and their links with the \SOTAP{} are described in Section~\ref{sec:lower_bound}, as well as the new algorithm we propose for this latter problem. The algorithms for the \SSPP{} and the \SRCSPP{} are described in Sections~\ref{sec:SSPP} and~\ref{sec:SRCSPP} respectively, as well as their complexity analysis. The paper ends with an extensive experimental study (Section~\ref{sec:experiments}).

\section{Literature review}\label{sec:literature}

\subsection{Deterministic shortest path}

The shortest path problem with deterministic travel times is one of the most studied problems in Operations Research. Efficient polynomial algorithms are known since the end of the fifties. Recently, research has focused on improving algorithm efficiency thanks to pre-processing, and state of the art techniques compute shortest paths on continental road networks in fractions of a microsecond.  An exhaustive survey on shortest path algorithms for route planning can be found in~\citep{bast2014route}.

\subsection{Deterministic constrained and multicriteria shortest path}

Deterministic shortest paths with resource constraints have many direct applications in logistics and transportation. Moreover, many operations research problems  require to find such shortest paths as a subroutine (e.g. vehicle routing problems). Resource constrained shortest path have therefore been extensively studied, and a detailed literature review is out of the scope of this paper. A review of state of the art techniques can be found in \citep{irnich2005shortest}. 

Several problems close to the resource constrained shortest path problem belong to multicriteria optimization \citep{ehrgott2005multicriteria}. Many algorithms have been developed for the bicriteria \citep{raith2009comparison} and multicriteria shortest path problem \citep{tarapata2007selected}. State of the art solution methods for resource constrained or multicriteria shortest path problems often rely on labeling algorithms with pre-processing \citep{dumitrescu2003improved,boland2006accelerated}. 

\subsection{Stochastic shortest path}

Stochastic shortest path problems have been extensively studied since the seminal work of \cite{frank1969shortest}. Models differ by the probability distributions used to model delay on arcs, and by the risk measure optimized.

A first line of papers considers the probability of on time arrival: a path maximizing the probability of on time arrival, or analogously, a path with minimum quantile of given order is searched. Approaches have been developed for both continuous \citep{frank1969shortest,chen2005path,nikolova2006stochastic,nikolova2010high} and discrete distributions \citep{mirchandani1976shortest}. An efficient labeling algorithm has recently been described when arc distributions are normal \citep{chen2013finding}. In a recent preprint, \cite{niknami2014tractable} proposes independently an idea that bears similarities with ours by using the solution of the \SOTAP{} to obtain lower bounds for the \SSPP{} with $\riskMeas(\cdot)=\P(\cdot\geq\tau)$. Note that our approach covers actually more general situations, even in this special case.


A second line of papers defines a shortest path as a path minimizing the expectation of a cost function \citep{loui1983optimal}.
Dynamic programming can be used when cost functions are affine or exponential \citep{eiger1985path}.
\cite{murthy1996relaxation,murthy1998stochastic} present an efficient labeling algorithm when arc distributions are normal and cost functions are piecewise-linear and concave. Instead of considering the expectation of a cost functions, \cite{sivakumar1994variance,nikolova2006stochastic,nikolova2010high} search the path that minimizes a positive linear combination of mean and variance.

Finally, \cite{miller2003path} suggest to use stochastic dominance to compare paths. Algorithms to generate all non-dominated paths are proposed in \cite{miller1997optimal,miller1998least,miller2003path,nie2009shortest,nie2012optimal}.

\subsection{Stochastic on time arrival}
The \textsc{Stochastic on Time Arrival Problem} searches to maximize the probability of arrival before a given thresholds on adaptive paths \citep{fu2001adaptive,fu1998expected,hall1986fastest}. An adaptive path is an application which, given a vertex reached and the time it took to get to this vertex indicates the next arc to choose to maximize the probability of on time arrival at destination. \cite{fan2006optimal} provide an algorithm for continuous distributions, and \cite{nie2006arriving} provide a pseudo-polynomial algorithm for discrete distributions. \cite{samaranayake2012tractable} develop a faster algorithm for discrete distributions, and \cite{sabran2014precomputation} provide pre-processing techniques to improve algorithm speed. 

\subsection{Shortest path under probability constraint}

Finally, the problem of finding a minimum cost path for deterministic arc costs under stochastic resource constraints have been introduced in \cite{kosuch2010stochastic}, which proposes a solution algorithm based on linear programming is derived.

\section{Usual stochastic order and risk measures}\label{sec:model}
\subsection{Usual stochastic order}

For any random variable $X$, let $F_X(t)=\P(X\leq t)$ be its cumulative distribution. The {\em usual stochastic order} $\leqst$ is a partial order defined on the set of cumulative distributions as follows: $F_X\leqst F_Y$ if $F_X(t) \geq F_Y(t)$ for all $t$. Endowed with the usual stochastic order, the set of cumulative distributions turns out to be a {\em lattice}.  It means that each collection of cumulative distributions has a unique least upper bound ({\em join}) and a unique greatest lower bound ({\em meet}). \\

We use the same notation $\leqst$ to extend the usual stochastic order to random variables. Two random variables $X$ and $Y$ are such that $X \leqst Y$ if $F_X \leqst F_Y$.  Note that in the context of random variables, it is a {\em quasiorder}: it is reflexive and transitive like a partial order, but not antisymmetric. The usual stochastic order is a classical notion in probability theory and enjoys several properties. Missing details and a more complete introduction may be found in~\citet{muller2002comparison}.  \\

One property is particularly useful in our work and is used in the paper without further mention.\\

{\em Let $X_{1}$, $X_{2}$, $Y_{1}$, and $Y_{2}$ be random variables such that $X_1$ and $Y_1$ are independent, and $X_2$ and $Y_2$ are independent. If $X_{1} \leqst X_{2}$ and $Y_{1} \leqst Y_{2}$, then $X_{1} + Y_{1} \leqst X_{2} + Y_{2}$.} \\

By a slight abuse, we define the meet (resp. the join) of a collection $(X_i)$ of random variables as a random variable with cumulative distribution equal to the meet (resp. the join) of the $F_{X_i}$'s. We use the classical notation $\meet$ for the meet and $\join$ for the join. We have in particular for any collection $(X_i)$ of random variables the following equality $$F_{\meet_i X_i}(t) = \max_i(F_{X_i}(t))\quad\mbox{for all $t\in\Z_+$.}$$

The meet of the empty collection is as usual equal to $+\infty$ and its join is equal to $-\infty$.

Since the supports of the random variables are assumed to be integer valued and of finite cardinality, their meet (resp. the join) can easily be computed: for each $t$ in the union of their supports, we simply keep the largest (resp. smallest) value among those taken by the cumulative distributions at $t$. In the paper, we only use the meet operation.

\subsection{Risk measures}

A risk measure $\riskMeas$ is said to be \textit{consistent with the usual stochastic order} if $X \leqst Y$ implies $\riskMeas(X)\leqst \riskMeas(Y)$. The algorithms for the \SSPP{} and the \SRCSPP{} introduced in this paper only require the consistency of the risk measure with the usual stochastic order. It is not a strong assumption since being consistent with the risk measure simply means penalizing delay. Note however that the variance does not satisfy this condition.

For examples, the following risk measures are covered by our work.
\begin{itemize}
\item[$\bullet$] $\riskMeas(\cdot)=\E[f(\cdot)]$ for any increasing function $f$.
\item[$\bullet$] $\riskMeas(\cdot)=\P(\cdot\geq\tau)$ for any $\tau\in\Z_+$.
\item[$\bullet$] $\riskMeas(\cdot) =VaR_\beta(\cdot)$ for any $\beta\in[0,1]$.
\item[$\bullet$] $\riskMeas(\cdot)=CVaR_\beta(\cdot)$ for any $\beta\in[0,1]$.
\end{itemize}

The first risk measure among the ones listed can be used for instance when the cost of being late is modeled. We can cite as an illustration the case of goods delivery, an increasing step function can model the successive penalties of a late delivery. 

The second one is typically used to measure the probability of going beyond some deadline. It corresponds to the risk measures of a traveler going to the airport: the only onjective is to arrive before the time $\tau$ when boarding gate is closed.

The third and fourth risk measures are widely used in finance.

$VaR_\beta(\cdot)$ is the {\em value at risk (with confidence level $\beta$)} and is defined by 
$$VaR_\beta(X)=\min\{t \in\Z_+:\, \P(X\leq t) \geq \beta\}.$$ Given a confidence level $\beta$, a random variable $X$ is smaller than its value at risk $VaR_\beta(X)$ with probability at least $\beta$, and $VaR_\beta(X)$ is the threshold value for which this inequality holds. For instance, if a bus company advertises the policy ``our buses are on time for $\beta \%$ of the trips'', then the shortest path with respect to $VAR_\beta$ gives the smallest travel time for which this policy holds.

$CVaR_\beta(\cdot)$ is the {\em conditional value at risk (with confidence level $\beta$)} and is defined by $$CVaR_\beta(X)=\E\left[X | X \geq VaR_\beta(X) \right].$$ Given a confidence level $\beta$, the conditional value at risk of a random variable $X$ can be described in a somehow informal way as its average value in the $1- \beta$ worst cases. In the context of shortest paths with random travel times, a $\beta$ equal to $0$ corresponds to a shortest path problem where expectation is minimized. A $\beta$ equal to $1$ corresponds to a robust approach: the optimal solution is the solution with minimal worst case behavior. As a consequence, $CVaR_{\beta}$ enables to obtain a tradeoff between expectation and robustness. This cost function suits well most traveler behavior: a small path with small $CVaR$ gives a small average traveling time with a low risk of very large travel time. 

\section{Stochastic lower bounds and On Time Arrival}\label{sec:lower_bound}

\subsection{Stochastic lower bounds}

Efficient shortest path algorithms are often based on lower bounds that allow to discard partial paths and to reduce the space of solutions that is explored. We do not depart from it and introduce stochastic lower bounds, which, instead of being single values, are random variables. 

We consider the following equation, where $\delta^+(v)$ denotes the set of arcs of the form $(v,u)$.
\begin{equation}\label{eq:stoc_lower}
\left\{\begin{array}{ll}Z_{d} = 0, &\\
Z_v \stackrel{(d)}{=} \bigmeet_{(v,u)\in\delta^+(v)} \left(X_{(v,u)}+Z_u\right) & \mbox{for all $v\in V\setminus\{d\}$}.
\end{array}\right.
\end{equation}
Proposition~\ref{prop:SOTAopt} below shows that this equation has always a solution which can be calculated in polynomial time. Before stating and proving this proposition, we explain the interest of this equation for stochastic shortest path problems.

\begin{prop}\label{prop:stoc_lower} Let $(Z_v)_{v\in V}$ be a solution of Equation~\eqref{eq:stoc_lower}.
For any vertex $v\in V$ and any $v$-$d$ path $P$, the inequality $Z_v\leq_{st}\sum_{a\in P}X_a$ holds.
\end{prop}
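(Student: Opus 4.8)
\emph{Proof plan.} The plan is to argue by induction on the number of arcs of the $v$-$d$ path $P$. In the base case $P$ has no arc, so $v=d$; then $\sum_{a\in P}X_a=0=Z_d$ and the claimed inequality holds trivially. Two ingredients will drive the inductive step. The first is that the meet is, by definition, the greatest lower bound for $\leqst$: since Equation~\eqref{eq:stoc_lower} gives $F_{Z_v}(t)=\max_{(v,u)\in\delta^+(v)}F_{X_{(v,u)}+Z_u}(t)$, we immediately obtain $Z_v\leqst X_{(v,u)}+Z_u$ for every arc $(v,u)\in\delta^+(v)$. The second is the property quoted in Section~\ref{sec:model} that the usual stochastic order is preserved when one adds a common random variable to both sides, provided the relevant independences hold.

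For the inductive step I would write $P=(v,u)+P'$, where $P'$ is a $u$-$d$ path with one fewer arc, and apply the induction hypothesis to get $Z_u\leqst\sum_{a\in P'}X_a$. Adding $X_{(v,u)}$ to both sides then yields $X_{(v,u)}+Z_u\leqst X_{(v,u)}+\sum_{a\in P'}X_a=\sum_{a\in P}X_a$, and chaining this with $Z_v\leqst X_{(v,u)}+Z_u$ and the transitivity of $\leqst$ finishes the induction. Restricting to elementary $P$ is harmless: any $o$-$d$ arc set contains an elementary $v$-$d$ path $Q$, and since each $X_a$ is $\Z_+$-valued we have $\sum_{a\in Q}X_a\leqst\sum_{a\in P}X_a$, so it suffices to prove the bound for $Q$.

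The one point to be careful about is the independence bookkeeping behind the addition step. On the one hand, $X_{(v,u)}$ must be independent of $Z_u$ for the term $X_{(v,u)}+Z_u$ to have the distribution used in Equation~\eqref{eq:stoc_lower}; this is part of how that equation is to be read (it is an identity between cumulative distributions, the distribution of $X_{(v,u)}+Z_u$ being the convolution of its marginals) and is in any case ensured by the explicit solution built in Proposition~\ref{prop:SOTAopt}. On the other hand, $X_{(v,u)}$ must be independent of $\sum_{a\in P'}X_a$, which is exactly where I use that $P$ is elementary, so that $(v,u)\notin P'$, together with the standing assumption that the family $(X_a)_{a\in A}$ is independent. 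Once these two independences are in place, the rest is the routine chaining described above, so I expect this bookkeeping to be the only real obstacle.
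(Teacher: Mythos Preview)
Your proposal is correct and follows exactly the approach the paper indicates: an induction on the number of arcs of $P$, using the meet inequality $Z_v\leqst X_{(v,u)}+Z_u$ from Equation~\eqref{eq:stoc_lower} together with the additivity of $\leqst$ under independence. The paper's own proof is the single sentence ``An induction on the number of arcs in $P$ allows to conclude,'' and your write-up simply fleshes out that induction (the minor slip ``$o$-$d$ arc set'' should read ``$v$-$d$,'' but this is inconsequential).
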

\begin{proof}
An induction on the number of arcs in $P$ allows to conclude.
\end{proof}

\subsection{On Time Arrival}

The \SOTAP{} aims at finding the path in $D$ that maximizes the probability of reaching $d$ before some time limit $\tau\in\Z_+$. If the solution has to be computed a priori, we are exactly in the case of the \SSPP, with $\riskMeas(X)=\P(X\geq\tau)$, which is a risk measure covered by the present work. In the \SOTAP, the path can be built during the trip. Concretely, an optimal policy has to computed. A policy $\pi$ for this problem associates to each pair $(v,t)\in V\times\Z_+$ an arc $a\in\delta^+(v)$. Such a policy $\pi$ is optimal if it maximizes the probability of reaching $d$ before $\tau$. If there is in $D$ a circuit $C$ with $\P(X_a=0)=1$ for all $a\in C$, the problem may not be well-defined. We suppose therefore that there are no such circuits. Under this assumption, and using elementary properties of Markov chains, we get that solving \SOTAP{} is equivalent to finding a solution of the following equation for all $t\in\Z_+$ (see also \cite{fan2005arriving} for a first formulation of this kind in a continuous setting):


\begin{equation}\label{eq:stoc_ontime}
\left\{\begin{array}{ll}
F_{d}(t) = 1 & \\ \\
\displaystyle{F_v(t) = \max_{(v,u)\in\delta^{+}(v)} \sum_{k=0}^t \P(X_{(v,u)}=k) F_u(t-k)} &\mbox{for all $v \in V\setminus\{d\}.$}
\end{array}\right.
\end{equation}

The quantity $F_v(t)$ is then the probability to reach $d$ from $v$ in less than time $t$ under an optimal policy. The decision to be taken at time $0$ on vertex $o$ is determined by the $(o,u)\in A$ for which the maximum is attained when $v=o$ and $t=\tau$ in Equation~\eqref{eq:stoc_ontime}.

Given a realization of the random variables $X_a$, the path obtained following the policy of Equation~\eqref{eq:stoc_ontime} is not necessary elementary. \citet{samaranayake2012tractable} provides a simple example of network in which an event realization leads to a non-simple path. We therefore adopt the following hypothesis: when a path crosses several times the same arc $a$, then we consider that delays encountered on the arc correspond to independent realizations of the random variable $X_a$.

Without loss of generality, we can assume that there exists in $D$ at least one $v$-$d$ path (otherwise, the vertex $v$ can simply be removed from $D$). We have $\lim_{t\rightarrow+\infty}F_v(t)=1$ for all $v\in V$ and $F_{v}(\cdot)$ is non-decreasing. Thus, $F_{v}(\cdot)$ can be considered as a cumulative distribution, and we can introduce a random variable $U_v$ whose cumulative distribution function is $F_v(\cdot)$.

This construction shows that any algorithm solving Equation~\eqref{eq:stoc_lower} solves the \SOTAP{} as well.

\begin{prop}\label{prop:lower_stoc_online}
The $U_v$'s defined above satisfy Equation~\eqref{eq:stoc_lower} and conversely, any solution of Equation~\eqref{eq:stoc_lower} has cumulative distributions that satisfy Equation~\eqref{eq:stoc_ontime}.
\end{prop}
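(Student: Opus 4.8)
The plan is to translate Equation~\eqref{eq:stoc_lower}, which is an identity between random variables up to equality in distribution, into an identity between cumulative distribution functions, and then to observe that this latter identity is nothing but Equation~\eqref{eq:stoc_ontime}; the two implications of the proposition follow by reading the translation in both directions.

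First I would record the two elementary facts that make the translation work. On the one hand, $F_{\bigmeet_i Y_i}(t)=\max_i F_{Y_i}(t)$ for every $t$, as already noted in Section~\ref{sec:model}. On the other hand, if $X$ and $Z$ are independent $\Z_+$-valued random variables, then $F_{X+Z}(t)=\sum_{k=0}^{t}\P(X=k)F_{Z}(t-k)$, by conditioning on the value of $X$. Here the summand $X_{(v,u)}+Z_u$ appearing in Equation~\eqref{eq:stoc_lower} is to be read with $X_{(v,u)}$ independent of $Z_u$, which is exactly what the convention adopted for repeated arcs provides; moreover the coupling between the different terms of the meet plays no role, since the distribution of a meet depends only on the marginal distributions of its terms.

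Then, for the ``conversely'' direction, I would take an arbitrary solution $(Z_v)_{v\in V}$ of Equation~\eqref{eq:stoc_lower} and set $F_v=F_{Z_v}$. Since $Z_d=0$ and the $X_a$ are nonnegative, $F_d(t)=1$ for all $t\in\Z_+$. For $v\neq d$, applying the two facts above to the cumulative distribution of the right-hand side of the second line of Equation~\eqref{eq:stoc_lower} gives $F_v(t)=\max_{(v,u)\in\delta^+(v)}\sum_{k=0}^{t}\P(X_{(v,u)}=k)F_u(t-k)$, which is exactly the second line of Equation~\eqref{eq:stoc_ontime}. For the direct direction, I would start from a solution $(F_v)_{v\in V}$ of Equation~\eqref{eq:stoc_ontime} and let $U_v$ be a random variable with cumulative distribution $F_v$; these are genuine cumulative distributions of $\Z_+$-valued variables by the discussion preceding the statement, and $F_d\equiv 1$ forces $U_d=0$. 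Reading the same computation backwards, and using that two $\Z_+$-valued variables with the same cumulative distribution on $\Z_+$ have the same law, I obtain $U_v\stackrel{(d)}{=}\bigmeet_{(v,u)\in\delta^+(v)}(X_{(v,u)}+U_u)$ for all $v\neq d$, i.e. the $U_v$'s satisfy Equation~\eqref{eq:stoc_lower}.

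The computation itself is routine; the only delicate point is the interpretation of the meet in Equation~\eqref{eq:stoc_lower} as a meet of independent sums $X_{(v,u)}+Z_u$, which the two preliminary remarks render harmless. The standing assumption that every vertex admits a path to $d$ (so that $\delta^+(v)\neq\emptyset$ for $v\neq d$) ensures that no empty-meet degeneracy arises.
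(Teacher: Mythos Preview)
Your proof is correct and follows the same route as the paper: both rely on the identity $F_{X\meet Y}(t)=\max(F_X(t),F_Y(t))$ to pass between Equation~\eqref{eq:stoc_lower} and Equation~\eqref{eq:stoc_ontime}. The paper's proof is a one-line appeal to this identity, whereas you have additionally spelled out the convolution formula for independent sums and the independence conventions, which makes your argument more explicit but not different in substance.
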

\begin{proof}
The equivalence between \eqref{eq:stoc_lower} and \eqref{eq:stoc_ontime} follows directly from the equality $$F_{X \meet Y}(t) = \max(F_{X}(t),F_{Y}(t)).$$
\end{proof}

\begin{re}
If all $X_a$'s are deterministic, the \SOTAP{} coincides with the usual deterministic shortest problem, for which by the way the solution computed a priori coincides with the solution computed during the trip.
\end{re}

\subsection{An algorithm}

We describe now a new polynomial algorithm computing a solution of Equation~\eqref{eq:stoc_lower}, and thus solving also the \SOTAP.

A random variable $Z'_v$ and an integer $t'_v$ are attached to each vertex $v$ and updated during the algorithm. Initially, $Z'_v=+\infty$ and $t'_v=+\infty$ for each vertex $v\neq d$, while $Z'_d=0$ and $t'_d=0$.
During the algorithm, a queue $L$ of vertices ``to be expanded'' is maintained. Initially, the queue $L$ contains only $d$.

The algorithm ends when $L$ is empty. While $L$ is not empty, the following operations are repeated:
\begin{itemize}
\item Extract from $L$ the vertex $u$ with minimum $t'_u$. In case there are several such vertices, choose one with maximal value of $F_{Z'_u}(t'_u)$.
\item Set $t'_{u} = +\infty$.
\item For each arc $(v,u)$ in $\delta^-(u)$, {\em expand} $u$ along $(v,u)$: if $Z'_v \nleqst X_{(v,u)}+Z'_{u}$, then
	\begin{itemize}
	\item {\em Update} $t'_{v} = \min(t'_{v}, \min\{t:\, F_{Z'_{v} }(t) < F_{X_{(v,u)}+Z'_{u}}(t)\})$.
	\item {\em Update} $Z'_{v}$ to $Z'_{v} \meet (X_{(v,u)}+Z'_{u})$.
	\item Add $v$ to $L$ (if it is not already present).
	\end{itemize}
\end{itemize}

\begin{prop}\label{prop:SOTAopt}
This algorithm terminates in less than $T|V|$ iterations, where $T$ is the maximum length an elementary $o$-$d$ path can take (over all possible paths and over all possible events). Setting $Z_v$ to be the last value of $Z'_v$ for each $v\in V$ provides then a solution of Equation~\eqref{eq:stoc_lower}. 
\end{prop}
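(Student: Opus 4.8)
The plan is to prove the three assertions of the proposition in turn: termination within $T|V|$ iterations, the fact that the algorithm is a correct Dijkstra-like procedure (monotonicity of the $t'_v$ values extracted), and finally that the terminal values $Z_v$ satisfy Equation~\eqref{eq:stoc_lower}. The underlying intuition is that this is a generalization of Dijkstra's algorithm in which the ``label'' of a vertex is the random variable $Z'_v$ rather than a scalar, and the key $t'_u$ records the smallest time at which $F_{Z'_u}$ is not yet ``settled'', i.e. the smallest $t$ at which the current value of $F_{Z'_u}(t)$ might still increase. The invariant I would maintain is: at any point of the execution, $F_{Z'_v}(t)$ is a lower bound (pointwise in the sense $F_{Z'_v}(t)\geq F_{Z_v}(t)$) for the true solution, and moreover $F_{Z'_v}(t)$ has reached its final value for all $t < t'_v$ whenever $v$ is not in $L$.

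First I would establish the Dijkstra-style monotonicity lemma: the sequence of values $t'_u$ of the vertices successively extracted from $L$ is non-decreasing. The argument mirrors the classical proof: when $u$ is extracted with key $t'_u$, any update it triggers on a predecessor $v$ sets $t'_v$ to at least $\min\{t:\,F_{Z'_v}(t)<F_{X_{(v,u)}+Z'_u}(t)\}$; since $X_{(v,u)}$ takes values in $\Z_+$ and, by the no-zero-circuit hypothesis used before Equation~\eqref{eq:stoc_ontime}, the relevant arc contributes a strictly positive amount of ``mass displacement'' (more precisely $F_{X_{(v,u)}+Z'_u}(t) \le F_{Z'_u}(t)$ with the first disagreement occurring at some $t \ge t'_u$, because $F_{Z'_u}$ is itself already settled below $t'_u$), the new key $t'_v$ is $\geq t'_u$. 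Here I would need to be slightly careful: arcs of length $0$ are allowed, so I must argue that the disagreement between $F_{Z'_v}$ and $F_{X_{(v,u)}+Z'_u}$ cannot occur strictly below $t'_u$, using that $F_{Z'_u}(s)$ for $s<t'_u$ is already final and already incorporated into $F_{Z'_v}$. This monotonicity gives termination: once a vertex is extracted with key value $t$, it can only re-enter $L$ with a strictly larger key, each key is an integer in $\{0,1,\dots,T\}$ (a vertex can never acquire a finite key larger than $T$, since $Z_v$ is supported on $[0,T]$), so each vertex is extracted at most $T+1$ times — I would sharpen the bookkeeping to get the stated $T|V|$ bound.

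Next I would prove correctness of the fixed point. When the queue empties, every vertex $v\neq d$ has been expanded along every incoming arc at least once after the last update of any of its successors, so for every arc $(v,u)\in\delta^+(v)$ we have $Z'_v \leqst X_{(v,u)} + Z'_u$ at termination, hence $F_{Z_v}(t) \geq F_{X_{(v,u)}+Z_u}(t)$ for all $t$, which gives $F_{Z_v}(t)\geq \max_{(v,u)\in\delta^+(v)} F_{X_{(v,u)}+Z_u}(t)$. For the reverse inequality, I would observe that $Z'_v$ is only ever modified by a meet with some $X_{(v,u)}+Z'_u$, and an easy induction (on the order in which the final values get ``frozen'', using the monotonicity lemma so that this order is well-founded) shows that the final $Z_v$ equals the meet over $\delta^+(v)$ of the $X_{(v,u)}+Z_u$, i.e. $F_{Z_v}(t)=\max_{(v,u)} F_{X_{(v,u)}+Z_u}(t)$. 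Invoking Proposition~\ref{prop:lower_stoc_online} (equivalently, the identity $F_{X\meet Y}=\max(F_X,F_Y)$) then identifies this with a solution of Equation~\eqref{eq:stoc_lower}.

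The main obstacle will be the monotonicity lemma in the presence of zero-length arcs: I must show that expanding along an arc never decreases a key below the current extraction threshold. The clean way is to prove the joint invariant ``for every $v\notin L$ and every $t<t'_v$, $F_{Z'_v}(t)$ equals its final value $F_{Z_v}(t)$'' by induction on extractions, so that any freshly discovered improvement at a predecessor necessarily lives at a time $\geq$ the current key; everything else is routine bookkeeping on finitely-supported integer distributions.
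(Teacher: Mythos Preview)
Your proposal has the correct skeleton (Dijkstra-style monotonicity of extraction keys, a bound on the number of extractions per vertex, and a fixed-point argument at termination), and this is indeed the paper's approach as well. The fixed-point part (``when $L$ is empty the $Z'_v$ satisfy Equation~\eqref{eq:stoc_lower}'') is essentially what the paper does via the auxiliary variables $Z''_v$, and your sketch there is adequate.

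There is, however, a genuine gap in your monotonicity argument. The crucial step is showing that between two consecutive \emph{extractions} of the same vertex $v$, the extraction key strictly increases (in the paper's notation, $t'_v\geq t''_v+1$). You correctly identify arcs with $\P(X_a=0)>0$ as the obstacle, but your proposed fix does not close it. When $u$ is extracted with key $t''_u$ and we expand along $(v,u)$, the first disagreement between $F_{Z'_v}$ and $F_{X_{(v,u)}+Z'_u}$ can occur exactly at $t''_u$ (not only strictly above), so the update can set $t'_v=t''_u$; monotonicity of $\min_w t'_w$ only gives $t''_u\geq t''_v$, hence merely $t'_v\geq t''_v$. The paper rules out the equality case $t'_v=t''_u=t''_v$ by using the \emph{secondary} selection rule in the algorithm (``among vertices with minimal $t'_u$, choose one with maximal $F_{Z'_u}(t'_u)$''): it shows that if $t'_v$ is updated to $t''_u$, then $F_{Z'_v}(t'_v)\leq F_{Z'_u}(t''_u)$, so while $\min_w t'_w$ stays constant the quantity $\max_w F_{Z'_w}(t'_w)$ is non-increasing, and this is what forces strict progress. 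You never invoke this tie-breaking rule. Moreover, your proposed invariant ``for every $v\notin L$ and every $t<t'_v$, $F_{Z'_v}(t)$ is final'' is vacuous as written: whenever $v\notin L$ one has $t'_v=+\infty$, so the invariant would assert that $Z'_v$ is fully final as soon as $v$ leaves the queue, which is exactly what fails and what you are trying to prove. You need either the secondary key, or an argument that genuinely replaces it.
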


Algorithms in \cite{nie2006arriving} and in~\cite{samaranayake2012tractable} make a stronger assumption on arc distributions: they suppose the existence of a $\delta>0$ such that $X_{a} \geq \delta$ for all arcs $a$. Moreover, the number of iterations of the algorithm in \citet{nie2006arriving} is always $T|V|$ and in \citet{samaranayake2012tractable} it is a $\Omega(T|V|)$, while in our approach, it corresponds to a worst case behavior: In general, we expect our algorithm to terminate after $\gamma |V|$ iterations for a small $\gamma$. The worst $\gamma$ encountered in the numerical experiments was $3.3$, see Section \ref{sec:experiments} for more details. 

\begin{proof}[Proof of Proposition \ref{prop:SOTAopt}]
At any time during the algorithm, let $Z''_v$ have the distribution of $Z'_v$ the last time $v$ was expanded and let $t''_v$ be the value of $t'_v$ right before being set to $+\infty$ because of the expansion.
If $v$ has never been expanded, let $Z''_v=+\infty$ and $t''_v=-1$. The update rule of $t'_v$ ensures thus that $$F_{Z''_v}(t)=F_{Z'_v}(t)\quad\mbox{for all $t\leq t'_v-1$}.$$

Thus, if the algorithm terminates, we have $t'_v=+\infty$ and $Z''_v=Z'_v$ for each vertex $v$. As $Z'_v$ is only updated when one of its outneighbor $u$ is expanded, we have 
\begin{equation}\label{eq:in_algo}
Z'_v \stackrel{(d)}{=} \bigmeet_{(v,u)\in\delta^+(v)} \left(X_{(v,u)}+Z''_u\right).
\end{equation}
Therefore $Z'_v$ is then a solution to Equation~\eqref{eq:stoc_lower}. 

It remains to prove that the algorithm terminates in a finite number of iterations. \\

We prove that $\min_{v\in V}t'_v$ does not decrease along the algorithm, and that if this quantity remains constant for consecutive iterations, then $\max_{v\in V}F_{Z'_v}(t'_v)$ does not increase.

Equation~\eqref{eq:in_algo} implies that $Z'_v\leqst X_{(v,u)}+Z''_u$ for any $(v,u)\in\delta^+(v)$. Since $F_{Z'_u}$ and $F_{Z''_u}$ coincide up to $t'_u-1$, we have 
\begin{equation}\label{eq:ineq_distr}F_{Z'_v}(t)\geq F_{X_{(v,u)}+Z'_{u}}(t)\quad\mbox{for all $t\leq t'_u-1$}.\end{equation} Moreover,  just before being updated because of the expansion of a vertex $u$, the integer $t'_v$ is equal to $t''_u$ because of the way the vertex $u$ has been selected in $L$. This remark and Equation~\eqref{eq:in_algo} implies together that after an expansion along an arc $(v,u)$, we have $t'_v\geq t''_u$. 

Consider the algorithm right after a vertex $u$ has been expanded. 

Let $v$ be such that $(v,u)\in\delta^+(v)$.
If $t'_v$ has just been updated to $t''_u$, then $$F_{Z'_v}(t'_v)=\sum_{k=0}^{t''_u}\P(X_{(v,u)}=k)F_{Z'_u}(t''_u-k),$$ and since $F_{Z'_u}$ is a nondecreasing map, we have $F_{Z'_v}(t'_v)\leq F_{Z'_u}(t''_u)$. If $t'_v$ has not been updated while being already such that $t'_v=t''_u$, then $F_{Z'_v}(t'_v)\leq F_{Z'_u}(t''_u)$ because of the selection rule of $u$. 
Therefore, the next vertex $w$ to be expanded will either be such that $t'_w>t''_u$, or such that $t'_w=t''_u$ and $F_{Z'_w}(t'_w)\leq F_{Z'_u}(t''_u)$. \\

To finish the proof of termination, we show now that $t'_v\geq t''_v+1$ at any time during the algorithm. If $t'_v=+\infty$, the inequality is clearly satisfied. Otherwise, let $u$ be the last vertex that has been expanded and whose expansion has lead to an update of $t'_v$. We consider the algorithm right after this expansion. $t''_u$ is thus the value of $t'_u$ right before having being set to $+\infty$, and $t'_v$ has been updated by the expansion.

Note that $t''_u\geq t''_v$ since $\min_{w\in V}t''_w$ does not decrease along the algorithm (it does not decrease because each $t''_w$ is the value of $\min_{v'\in V} t'_{v'}$ at some previous iteration, and this latter quantity is non decreasing, as already noted).  We already know that $t'_v\geq t''_u$. Thus, if $t'_v>t''_u$, we have $t'_v\geq t''_v+1$ as required. It remains to check whether it is possible to have $t'_v=t''_u=t''_v$ simultaneously. In such a case, we would necessarily have $F_{Z'_v}(t''_v)\geq F_{Z'_u}(t''_u)$ since $\max_{w\in V}F_{Z'_w}(t'_w)$  does not increase  when $\min_{w\in V}t'_w$ remains constant. Since $F_{Z'_u}$ is a nondecreasing map, we would have $F_{Z'_u}(t''_u)\geq F_{X_{(v,u)}+Z'_{u}}(t''_u)$ and thus $$F_{Z'_v}(t)\geq F_{X_{(v,u)}+Z'_{u}}(t)\quad\mbox{for all $t\leq t''_u$}$$ with the help of Equation~\eqref{eq:ineq_distr}. Thus, $t'_v$ should have been updated to a value greater than $t''_u$. Hence, in any case, we have $t'_v\geq t''_v+1$.\\

Now, denote by $T_v$ the maximum length an elementary $v$-$d$ path can take. We claim that once $Z'_v$ has been updated for the first time, we have $F_{Z'_v}(T_v)=1$ all along the algorithm. It can be proved by a direct induction on the number of arcs an elementary path may have with the help of Equation~\eqref{eq:in_algo}. Moreover, since the inequality $t'_v\geq t''_v+1$ holds along the algorithm, we may have $F_{Z'_{v} }(t) < F_{X_{(v,u)}+Z'_{u}}(t)$ only for $t>t''_v$. Therefore, after at most $T_v$ iterations, the test ``$Z'_v\nleqst X_{(v,u)}+Z'_{u}$'' is always false. It implies that after at most $\max_{v\in V}T_v$ iterations, the algorithm terminates.
\end{proof}

\begin{re}
For each elementary path $P$, define $T_P$ to be the maximum length it can takes over all possible realizations. Denote by $\mathcal{P}_{vd}$ the set of all elementary $v$-$d$ paths.
Proposition~\ref{prop:SOTAopt} is actually true for $T=\min_{P\in\mathcal{P}_{od}}T_P$. Indeed, in the proof above, we show that the algorithms terminates after a finite number of iterations. Denote $\overline{T}_v=\min_{P\in\mathcal{P}_{vd}}T_P$. Proposition~\ref{prop:stoc_lower} ensures that $F_{Z_v}(\overline{T}_v)=1$. Thus, as soon as $t''_v=\overline{T}_v$, the test ``$Z'_v\nleqst X_{(v,u)}+Z'_{u}$'' is always false. It implies that after at most $\max_{v\in V}\overline{T}_v$ iterations, the algorithm terminates.
\end{re}


\begin{re}
When all $X_a$ are deterministic, this algorithm coincides with Dijkstra's algorithm.
\end{re}

\section{\SSPP}\label{sec:SSPP}

\subsection{Complexity}

When $\riskMeas$ is linear over independent random variables and dealt with as an oracle, the \SSPP{} can clearly be solved in polynomial time by any Dijsktra-like algorithm.  
However, we were not able to decide whether the problem remains polynomial in the general case, still dealing with $\riskMeas$ as an oracle. Nevertheless, we are able to prove the following theorem. 

\begin{theo}\label{theo:SSPPcomplexity} There is no polynomial algorithms with a complexity function independent of $\riskMeas$ solving 
the \SSPP, unless $\textup{P}=$\NP.
\end{theo}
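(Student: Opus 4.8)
The plan is to establish a reduction from an \NP-hard problem, using the freedom we have in choosing the risk measure $\riskMeas$. The natural candidate is a risk measure built on the probability of being late, i.e. something of the form $\riskMeas(\cdot)=\P(\cdot\geq\tau)$ for a well-chosen threshold $\tau$, since this is a legitimate risk measure covered by the paper and, more importantly, computing it requires access to the whole distribution of a sum of independent random variables. The key observation is that if an algorithm's running time were bounded by a polynomial \emph{independent of} $\riskMeas$, then one single polynomial $p(n)$ would have to work for \emph{every} risk measure simultaneously, and in particular for risk measures encoding hard combinatorial tests on the path. This is what we will exploit.

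Concretely, I would first recall that the problem of deciding whether a directed graph contains an $o$-$d$ path of \emph{exactly} a prescribed length (say, visiting a prescribed number of arcs, or with prescribed total deterministic cost), or related exact/partition-type problems, is \NP-complete. Then, given an instance of such a problem, I would build a \SSPP{} instance in which the random variables $X_a$ encode the relevant arc data (for instance deterministic $X_a=c_a$), and choose $\riskMeas$ to be a risk measure that returns a small value precisely on those sums corresponding to a feasible solution and a large value otherwise — e.g.\ $\riskMeas(X) = \P(X \geq \tau)$ for the appropriate $\tau$, or more generally $\riskMeas(X)=\E[f(X)]$ for a suitable increasing $f$ that is consistent with the usual stochastic order. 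The point is that such an $\riskMeas$ is evaluable in polynomial time on any single random variable (so the oracle calls are cheap), yet the only way to find a path minimizing $\riskMeas(\sum_{a\in P}X_a)$ is to solve the underlying \NP-hard problem. Since a polynomial algorithm with a $\riskMeas$-independent complexity bound would solve every such instance in polynomial time, we would get $\textup{P}=$\NP.

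The technical heart of the argument is making the risk measure do the combinatorial work while staying legitimate: $\riskMeas$ must be consistent with the usual stochastic order (equivalently, it must ``penalize delay''), it must be a genuine mapping from random variables to reals (well-defined on all distributions, not just the ones appearing in the reduction), and evaluating it on one random variable must itself be polynomial so that the hardness cannot be blamed on the oracle. The cleanest route is probably to take deterministic arc lengths $X_a = c_a$, so that $\sum_{a\in P}X_a$ is the deterministic quantity $\sum_{a\in P} c_a$, and to pick a target value $V$; then an increasing $f$ with a sharp ``valley'' at $V$ (say $f(x) = |x - V|$, which is not monotone — so instead $f(x)=\max(V-x,\,M(x-V))$ for a large multiplier $M$, which \emph{is} increasing once $M\geq 1$ is combined appropriately, or simply work with $\P(\cdot \geq \tau)$ and $\P(\cdot \leq \tau)$-type combinations) turns ``is there a path of cost exactly $V$?'' into ``is the minimum of $\riskMeas(\sum_{a\in P} X_a)$ below a threshold?''. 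One must double-check that the chosen $f$ or threshold functional is consistent with $\leqst$; with deterministic variables, $X\leqst Y$ just means $X\le Y$ pointwise in the relevant sense, so any nondecreasing $f$ works, and ``path of cost at most $V$'' is trivially polynomial — hence one really needs an \emph{exact}-flavoured or \emph{subset-sum}-flavoured target, e.g. reduce from \textsc{Partition} or from \textsc{Exact Path}, where hitting a precise value is the difficulty.

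I expect the main obstacle to be precisely this last point: choosing the reduction source and the risk measure so that (i) the risk measure is consistent with the usual stochastic order, (ii) it is polynomially evaluable on a single distribution (so that the $\riskMeas$-oracle model is honest), and (iii) minimizing it over $o$-$d$ paths is \NP-hard. A convenient way around the monotonicity tension is to keep $\riskMeas$ of the form $\P(\cdot\geq\tau)$ but encode the hard instance using \emph{non-deterministic} $X_a$'s whose supports are tailored so that the distribution of $\sum_{a\in P}X_a$ at the point $\tau$ distinguishes feasible from infeasible paths — for example giving each arc a Bernoulli-type contribution so that $\P(\sum_{a\in P}X_a\geq\tau)$ encodes a counting/knapsack condition along $P$. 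Once the encoding is fixed, the rest is a standard ``if a single polynomial bound sufficed for all $\riskMeas$, apply it to this family'' argument, and the theorem follows.
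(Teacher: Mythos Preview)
Your proposal correctly identifies the overall shape of the argument --- a polynomial reduction from an \NP-hard problem, exploiting the freedom to choose $\riskMeas$ per instance --- and you correctly isolate the main technical constraint: $\riskMeas$ must be consistent with the usual stochastic order. But the proposal does not contain a working reduction, and the routes you sketch either provably collapse or are left unspecified.

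The concrete gap is this. With deterministic $X_a=c_a$ and any $\riskMeas$ consistent with $\leqst$, the quantity $\riskMeas(\sum_{a\in P}X_a)$ is a nondecreasing function of the scalar $\sum_{a\in P}c_a$, so minimizing it is just the ordinary shortest path problem. You see this and try to rescue the idea with an ``exact value'' target, but any $f$ with a valley at $V$ (including your $\max(V-x,M(x-V))$) fails monotonicity, hence fails consistency with $\leqst$; no repair of this kind can work. Your fallback --- keep $\riskMeas=\P(\cdot\geq\tau)$ and make the $X_a$ random so that the tail probability encodes a knapsack-like condition --- might be salvageable, but you give no encoding, and it is not obvious how a single tail functional $\P(\cdot\geq\tau)$ can be made to simultaneously enforce a feasibility constraint and reflect an objective value. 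That is precisely the crux, and it is missing.

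The paper's proof resolves this crux by a different device that is worth comparing. It reduces from the (deterministic) resource constrained shortest path problem, encoding \emph{two} numbers per arc via a two-point distribution: $X_a$ equals $c_a$ or $r_a(1+\max_b c_b)$, each with probability $1/2$. Then it uses a \emph{composite} risk measure $\overline\riskMeas=\riskMeas_{\min}+(\sum_a c_a)\riskMeas_{\max}$, where $\riskMeas_{\min}(X)$ is the minimum of the support (so $\riskMeas_{\min}(X_P)=\sum_{a\in P}c_a$) and $\riskMeas_{\max}(X)$ is the indicator that the maximum of the support exceeds $M=R(1+\max_a c_a)$ (so $\riskMeas_{\max}(X_P)=0$ iff $\sum_{a\in P}r_a\leq R$). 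Both pieces are consistent with $\leqst$, hence so is $\overline\riskMeas$, and $\overline\riskMeas(X_P)$ equals $\sum_{a\in P}c_a$ on feasible paths and is inflated by $\sum_a c_a$ on infeasible ones. The key idea you are missing is exactly this: let the \emph{support endpoints} of $X_a$ carry two independent additive quantities, and build $\riskMeas$ out of monotone functionals that read those endpoints separately. Trying to push everything through a single tail probability $\P(\cdot\geq\tau)$ is what makes your sketch stall.
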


\begin{proof}The proof consists in describing a polynomial reduction of the deterministic resource constraint path problem to some \textsc{Stochastic $\overline\riskMeas$-Shortest Path Problem}, with an adequate risk measure $\overline\riskMeas$ consistent with the usual stochastic order. The deterministic resource constrained path problem consists in a directed graph $D=(V,A)$, two vertices $o,d\in V$, nonnegative integers $(c_a)_{a\in A}$ (the costs), nonnegative integers $(r_a)_{a\in A}$ (the resources), and a nonnegative integer $R$ (the capacity). It aims at finding a path $P$ with $\sum_{a\in P}r_a\leq R$ and with minimal cost $\sum_{a\in P}c_a$.
It is an \NP-hard problem, see~\citet{handler1980dual}.

To that purpose, we define two risk measures $\riskMeas_{\min}$ and $\riskMeas_{\max}$.
$$\begin{array}{rcl}
\riskMeas_{\min}(X) & = & \min\{t\in\Z_+:\, F_X(t)>0)\} \\ \\
\riskMeas_{\max}(X) & = & \left\{
\begin{array}{rl}
0 &\text{ if } \max\{t\in\Z_+:\,\P(X=t)> 0 \}\leq M  \\
1 & \text{ otherwise,}
\end{array}
\right.
\end{array}
$$where $M=R(1+\max_{a\in A}c_a)$. We define
$$\overline\riskMeas=\riskMeas_{\min}+\left(\sum_{a\in A}c_a\right)\riskMeas_{\max}.$$
The risk measure $\riskMeas_{\min}$ is consistent with the usual stochastic order because $F_{X}(t) \geq F_{Y}(t)$ for all $t$ implies $ \min\{t\in\Z_+:\, F_X(t)>0)\} \leq \min\{t\in\Z_+:\, F_Y(t)>0)\}$. As well as $\rho_{min}$, the risk measure $\riskMeas_{\max}$ is consistent with the usual stochastic order because $\max\{t\in\Z_+:\,\P(X=t)> 0 \} = \min\{t\in\Z_+:\,F_{X}(t)=1 \}$ and $F_{X}(t) \geq F_{Y}(t)$ for all $t$ implies $ \min\{t\in\Z_+:\, F_X(t)=1)\} \leq \min\{t\in\Z_+:\, F_Y(t)=1)\}$.  Thus $\overline\riskMeas$ is also consistent with the usual stochastic order.

We describe now the reduction to the \textsc{Stochastic $\overline\riskMeas$-Shortest Path Problem}. Given an arc $a$, we define $X_a$ as follows: 
$$\P(X_a = c_a) = \frac{1}{2}\quad\mbox{ and }\quad\P\left(X_a = r_a(1+\max_{b \in A} c_b)\right) = \frac{1}{2}.$$ $X_a$ can only take two values.

The deterministic resource constrained shortest path problem has a feasible solution if and only if the \textsc{Stochastic $\overline\riskMeas$-Shortest Path Problem} has a solution of cost less than $\sum_{a\in A}c_a$. In this case, the optimal solutions of both problems coincide, as we show now.

We assume that the deterministic resource constrained shortest path problem has a feasible solution. Let $P$ be  an optimal solution of the \textsc{Stochastic $\overline\riskMeas$-Shortest Path Problem}. We have then $\overline\riskMeas(\sum_{a\in P}X_a) =\sum_{a\in P} c_a$ and $\sum_{a\in P} r_a \leq R$, which implies that the deterministic resource constraint shortest path problem has a feasible solution of cost $\sum_{a\in P} c_a$.  Conversely, an optimal solution $P'$ of the deterministic resource constrained shortest path problem provides a feasible solution of the \textsc{Stochastic $\overline\riskMeas$-Shortest Path Problem} of cost $\sum_{a\in P'} c_a$.
\end{proof}

Note the $\overline\riskMeas$ used in the proof being fixed, the \SSPP{} becomes polynomially solvable. The proof above works precisely because the complexity function must be independent of the risk measure.


\subsection{Algorithm}\label{subsec:algo_ssp}

For any path $P$, we denote by $X_P$ the random variable $\sum_{a\in P}X_a$. Sometimes in the proofs or in the computations, the path $P$ can be non-simple. In this case, independent copies of the $X_a$'s appearing several times in the sum are used. We assume given for each vertex $v$ a random variable $Z_v^{LB}$ such that $Z_v^{LB} \leqst X_P$ for all $v$-$d$ paths $P$. Such a collection of random variables $(Z_v^{LB})_{v\in V}$, which play the role of stochastic lower bounds, can be computed for instance by the techniques presented in Section~\ref{sec:lower_bound}.

These stochastic lower bounds are used to discard some partial paths that are not subpaths of an optimal path. It resembles classical techniques used by shortest path algorithms in the deterministic setting, see for instance \cite{bast2014route} for state of the art techniques, in which the lower bound is a real number, while in our setting the lower bound is a random variable.  \\

We present the algorithm, which is a labeling one. A {\em label} $\lab$ is a pair $(v_{\lab},Y_{\lab})$, where $v_{\lab}$ is a vertex and $Y_{\lab}$ is a random variable. Labels are stored in a queue $\labList$. Initially, $\labList$ contains a unique label $\lab_{o} = (o,0)$. If there is no $o$-$d$ paths, the algorithm returns nothing. Otherwise, an elementary $o$-$d$ path $P_0$ is computed and $\riskMeas_{od}^{UB}$ is set to $\riskMeas(X_{P_0})$. The algorithm ends when $\labList$ is empty. While $\labList$ is not empty, the following operations are repeated: 
\begin{itemize}
\item[$\bullet$] Extract a label $\lab$ of $\labList$. 
\item[$\bullet$] If $v_{\lab}=d$ and $\riskMeas(Y_{\lab})<\riskMeas_{od}^{UB}$: update $\riskMeas_{od}^{UB}$ to $\riskMeas(Y_{\lab})$.
\item[$\bullet$] Otherwise:
If $\riskMeas(Y_{\lab}+Z_{v_{\lab}}^{LB})<\riskMeas_{od}^{UB}$, {\em expand} label $\lab$: for each arc $(v_{\lab},v)$ in $\delta^+(v_{\lab})$, add a new label $(v,Y )$ to $\labList$, where $Y\stackrel{(d)}{=}Y_{\lab} + X_{(v_{\lab},v)}$.
\end{itemize}

\begin{theo}\label{theo:stoBackward} Suppose that $\P(X_a\neq 0)> 0$ for all $a\in A$ and that there exists at least one solution. If $\riskMeas$ is consistent with the usual stochastic order, then the algorithm described above terminates after a finite number of iterations and at the end, $\riskMeas_{od}^{UB}$ is the optimal value of a solution of the \SSPP.
\end{theo}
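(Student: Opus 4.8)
The plan is to prove two things: termination, and correctness of the returned value. I would begin with \emph{termination}. The key observation is that every label $\lab$ corresponds to a walk in $D$ from $o$ to $v_{\lab}$, and $Y_{\lab} \stackrel{(d)}{=} X_W$ for that walk $W$. A label is only expanded if $\riskMeas(Y_{\lab}+Z_{v_{\lab}}^{LB}) < \riskMeas_{od}^{UB}$. The hypothesis $\P(X_a \neq 0) > 0$ for all arcs, together with integrality and consistency with $\leqst$, is what prevents infinite walks from surviving the test: if a walk $W$ uses many arcs, then $X_W$ is stochastically larger than a sum of many nonzero (hence $\geq 1$ with positive probability, but more usefully: with $\E$ growing, or with the support shifting up) contributions. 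I would make this precise by showing there is an integer $N$ such that any walk with more than $N$ arcs $W$ satisfies $X_W \geq_{st} X_{P_0}$ — for instance because $F_{X_W}$ is dominated below some threshold that exceeds $\riskMeas_{od}^{UB}$'s relevant quantile; concretely, each arc contributes a random variable that is $\geq_{st}$ a Bernoulli-type variable bounded away from the constant $0$, so the sum of $k$ of them has $F_{X_W}(t) \to 0$ as $k\to\infty$ for every fixed $t$, hence eventually $X_W \geq_{st}$ (some fixed $v$-$d$ completion bound), making $\riskMeas(Y_\lab + Z_{v_\lab}^{LB}) \geq \riskMeas(X_{P_0}) \geq \riskMeas_{od}^{UB}$ by consistency. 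Since $\riskMeas_{od}^{UB}$ only decreases, once it drops below the relevant value no long-walk label is ever expanded again; and only finitely many walks have at most $N$ arcs, so only finitely many labels are ever created.

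Next, \emph{correctness}. At every moment of the execution $\riskMeas_{od}^{UB}$ equals $\riskMeas(X_P)$ for some actual $o$-$d$ path $P$ (initially $P_0$, and updates happen only at labels with $v_\lab = d$, which correspond to $o$-$d$ walks — I would note that a non-simple $o$-$d$ walk $W$ has $X_W \geq_{st} X_{P'}$ for the simple path $P'$ obtained by deleting cycles, using $\P(X_a\neq 0)>0$, so $\riskMeas(X_W) \geq \riskMeas(X_{P'})$ and the value recorded is never better than that of a genuine elementary path; hence at termination $\riskMeas_{od}^{UB} \geq \riskMeas^{\mathrm{OPT}}$). For the reverse inequality, let $P^* = (a_1,\dots,a_\ell)$ be an optimal elementary path with value $\riskMeas^{\mathrm{OPT}}$, and let $P^*_i$ denote its prefix of length $i$. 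I would argue that the label $(v,Y)$ with $Y \stackrel{(d)}{=} X_{P^*_i}$ is created for every $i$: by induction on $i$, the prefix label is either extracted and expanded — which creates the length-$(i{+}1)$ prefix label — or else, when extracted, it already fails the expansion test, meaning $\riskMeas(X_{P^*_i} + Z_{v}^{LB}) \geq \riskMeas_{od}^{UB}$ at that moment. But Proposition~\ref{prop:stoc_lower} / the hypothesis on $Z^{LB}$ gives $Z_v^{LB} \leq_{st} X_{(a_{i+1},\dots,a_\ell)}$, so $X_{P^*_i} + Z_v^{LB} \leq_{st} X_{P^*}$, and consistency yields $\riskMeas_{od}^{UB} \leq \riskMeas(X_{P^*_i}+Z_v^{LB}) \leq \riskMeas(X_{P^*}) = \riskMeas^{\mathrm{OPT}}$, giving the desired bound directly. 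Otherwise the induction runs to $i = \ell$, the full label $(d, X_{P^*})$ is created and eventually extracted, and the update step forces $\riskMeas_{od}^{UB} \leq \riskMeas(X_{P^*}) = \riskMeas^{\mathrm{OPT}}$.

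The main obstacle I anticipate is the termination argument — specifically, turning ``$\P(X_a \neq 0) > 0$'' into a quantitative bound on the number of arcs a surviving label can have. One has to be careful that $\riskMeas$ is only an oracle consistent with $\leqst$, so the argument cannot use any quantitative property of $\riskMeas$ beyond monotonicity; everything must be pushed onto the random variables themselves, showing $X_W \geq_{st} (\text{fixed completion bound or } X_{P_0})$ once $W$ is long enough, and only then invoking consistency. The subtlety is that ``$X_a \geq 1$ with positive probability'' does not make $X_a \geq_{st} 1$; the right statement is that a long sum of such variables stochastically dominates any prescribed constant eventually (since for fixed $t$, $F_{X_W}(t) = \P(X_W \le t) \le \P(\text{at most } t \text{ of the nonzero-mass arcs realized large}) \to 0$), which is enough because $X_{P_0}$ has bounded support and every $Z_v^{LB}$ is a valid lower bound. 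The rest is bookkeeping with labels and the monotone evolution of $\riskMeas_{od}^{UB}$.
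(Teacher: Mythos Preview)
Your proposal is correct and follows essentially the same approach as the paper's proof: termination via the observation that sufficiently long walks $W$ satisfy $X_W \geq_{st} X_{P_0}$ (the paper states this as a separate claim using the same binomial-type estimate $F_{S_k}(t)\to 0$), and correctness by tracking prefixes of an optimal elementary path and using the lower-bound property of $Z_v^{LB}$ when a prefix label fails the expansion test. The only cosmetic differences are that the paper bounds the number of expanded labels by the number of paths with $\riskMeas(X_P)<\riskMeas(X_{P_0})$ rather than by an explicit arc-count threshold $N$, and phrases the correctness step as ``take the longest prefix appearing as a label'' instead of your forward induction---both are equivalent formulations of the same argument.
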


Before giving the proof, we give the two ideas the algorithm relies on. Note that a label $\lab$ corresponds actually to some $o$-$v_{\lab}$ path $P_{\lab}$ that the algorithm tries to expand in an optimal path (more details are given in the proof below).

The first idea is the following. At any time during the algorithm, $\riskMeas_{od}^{UB}$ is an upper bound on the optimal cost. Suppose that $\riskMeas_{od}^{UB}$ is not tight. If $P$ is an $o$-$v$ subpath of an optimal solution, then
\begin{equation}\label{eq:pathDomDiscard}
\riskMeas(X_P+Z_v^{LB}) <\riskMeas_{od}^{UB}.
\end{equation}
In an enumeration of all the $o$-$d$ paths to find an optimal path, any partial path $P$ that does not satisfy Equation~\eqref{eq:pathDomDiscard} can thus be discarded.

The second idea is that, when an $o$-$d$ path $P$ is such that $\riskMeas(X_P) < \riskMeas_{od}^{UB}$, then $\riskMeas_{od}^{UB}$ can be updated to $\riskMeas(X_P)$. Such an update reduces the number of paths satisfying Equation~\eqref{eq:pathDomDiscard}. \\

As it is described, the algorithm computes only the optimal cost of a solution. The algorithm can easily be adapted to find the optimal path itself, 
simply by maintaining, for each label $\lab$ the label whose expansion has lead to its creation.  


\begin{re}\label{re:Xa0}
It is easy to deal with the presence of variables $X_a$ with $X_a=0$. If $X_{(u,v)}=0$, remove $(u,v)$ and add arcs $(w,v)$ for $w\in N^-(u)$ with $X_{(w,v)}\stackrel{(d)}{=}X_{(w,u)}$ as well as arcs $(u,w)$ for $w\in N^+(v)$ with $X_{(u,w)}\stackrel{(d)}{=}X_{(v,w)}$.
\end{re}

\begin{proof}[Proof of Theorem~\ref{theo:stoBackward}]
We associate to each label $\lab$ an $o$-$v_{\lab}$ path $P_{\lab}$ such that $Y_{\lab}=X_{P_{\lab}}$: If $P_{\labB}$ is associated to the label $\labB$ and if $\lab$ is obtained by expanding $\labB$, we define $P_{\lab}$ as the path $P_{\labB}+(v_{\labB},v_{\lab})$. It is easy to see that there is at most one label associated to a given path. This remark will be useful in the proof.

Initially, $\riskMeas_{od}^{UB}$ is equal to $\riskMeas(X_{P_{0}})$, where $P_{0}$ is the elementary $o$-$d$ path computed at the beginning of the algorithm. Besides, $\riskMeas_{od}^{UB}$ can only decrease during the algorithm. In addition, we have $\riskMeas(X_{P_{\lab}}) \leq \riskMeas(Y_{\lab} + Z_{v_{\lab}})$ by consistency of $\riskMeas$ with the usual stochastic order, and because $X_{P_{\lab}} \stackrel{(d)}{=} Y_{\lab} \leqst Y_{\lab} + Z_{v_{\lab}}^{LB}$. Thus, as only labels such that $\riskMeas(Y_{\lab}+Z_{v_{\lab}}^{LB}) < \riskMeas_{od}^{UB}$ are expanded, and as, given a path $Q$, there is at most one label such that $P_{\lab} = Q$, the number of labels that are expanded during the algorithm is upper-bounded by the number of paths $P$ such that $\riskMeas(X_{P}) < \riskMeas(X_{P_{0}})$. 

To prove that the number of paths $P$ satisfying $\riskMeas(X_{P}) < \riskMeas(X_{P_{0}})$ is finite, we need the following claim. \\

{\em Let  $X$ and $Y$ be random variables satisfying the assumption of Section~\ref{subsec:assump} and denote by $S_k$ be the sum of $k$ independent copies of $X$. Suppose that $\P(X\neq 0)>0$. Then there exists an integer $n$ such that $Y\leqst S_n$.}\\

This claim is proved as follows. Given an integer $t$, the event $S_{k}\leq t$ requires that at least $k-t$ copies of $X$ are equal to $0$. Using this idea and then the Stirling formula, we obtain $F_{S_k}(t)\leq {k\choose{t}}P(X=0)^{k-t} = O(k^{t}P(X=0)^{k-t})$. For any integer $t$, we have thus $\lim_{k\rightarrow+\infty}F_{S_k}(t)=0$. It implies that for any $t\leq t_{\max}^Y$, where $t_{\max}^Y$ is the maximum value $Y$ can take, there is a $k_t$ such that $F_{S_{k}}(t)<F_Y(t)$ when $k\geq k_t$. The quantity $t_{\max}^Y$ is finite, because  of the finiteness of the support of $Y$. Defining $n=\max_{0\leq t\leq  t_{\max}^Y} k_t$ proves the claim. \\

As a consequence of the claim, there is only a finite number of paths $P$ such that $X_{P_{0}} \nleqst X_{P}$. As $X_{P_{0}} \leqst X_{P}$ implies $\rho(X_{P_{0}}) \leq \rho(X_{P})$ by consistency of $\riskMeas$ with the usual stochastic order, there is only a finite number of paths $P$ such that $\riskMeas(X_{P}) < \riskMeas(X_{P_{0}})$. Hence, only a finite number of labels are expanded and the algorithm terminates after a finite number of iterations.

We prove now that the algorithm returns the correct value. Note that $\riskMeas_{od}^{UB}$ is at any time the cost of some $o$-$d$ path. Let $P^*$ be the $o$-$d$ path that has updated $\riskMeas_{od}^{UB}$ for the last time. Note that $P^*$ is not necessarily elementary.

Suppose that there is at least one $o$-$d$ path, otherwise the algorithm would return $+\infty$. Denote by $P$ any elementary $o$-$d$ path. There are subpaths $Q$ of $P$ starting at $o$ and for which there exists a label $\lab$ such that $Q=P_{\lab}$. The path reduced to the vertex $o$ is one of them. Let $P_1$ be such a subpath with the largest number of arcs, $\labB$ be such that $P_1=P_{\labB}$, and $v$ be the destination of $P_1$. We also denote by $P_2$ the  $v$-$d$ subpath of $P$. Since $P_1$ has been chosen with a largest number of arcs, the label $\labB$ has not been expanded. If $v=d$, we have $\riskMeas(X_P)$ larger than or equal to the last value of $\riskMeas_{od}^{UB}$. Otherwise, $v\neq d$ and we had
$\riskMeas(Y_{\labB}+Z_{v}^{LB})\geq\riskMeas_{od}^{UB}$ when $\labB$ has been extracted. Since $\riskMeas_{od}^{UB}$ is non-increasing during the algorithm and since we have $X_{P_2}\geq_{st} Z_{v}^{LB}$, the cost $\riskMeas(X_P)$ is necessarily larger than or equal to the last value of $\riskMeas_{od}^{UB}$, which is larger or equal to the cost of some elementary $o$-$d$ path: $P^*$ itself if it is elementary, or the elementary subpath obtained by removing the cycles otherwise. Thus the last value of $\riskMeas_{od}^{UB}$, which is equal to $\riskMeas(P^*)$, is the optimal cost.\end{proof}

\subsection{Speeding up the algorithm}\label{subsec:full_algo_stoch}

\subsubsection{Keys}

We can speed up the algorithm by making $\labList$ a priority queue and using a key equal to $\riskMeas(Y_{\lab}+Z_{v_{\lab}}^{LB})$ for the element $(v_{\lab},Y_{\lab})$. The idea behind this choice is that we give priority to the label that seems to be the most promising, as $\riskMeas(Y_{\lab}+Z_{v_{\lab}}^{LB})$ is a lower bound on the cost of any $o$-$d$ having $P_{\lab}$ as subpath. The efficiency of the methods depends on the quality of the stochastic lower bound $Z_{v_{\lab}}^{LB}$. In such a way, we may discard more partial paths, since the upper bound $\riskMeas_{od}^{UB}$ will likely be smaller than without the use of such keys.

\subsubsection{Upper bounds}\label{sec:speedUB}

Suppose that, in addition to the lower bounds $Z_v^{LB}$, we maintain for each vertex $v$ a list $\ell_v=(X_{P_{1}}, X_{P_{2}}, \ldots , X_{P_{k}})$, where the $P_i$ are elementary $v$-$d$ paths. Then, these paths can be used to update the minimum known cost during the algorithm.
Let $\lab$ be a label. The path $P_{\lab}$ is an $o$-$v_{\lab}$ path. As a consequence, for any distribution $X_{P_{i}}\in \ell_{v_{\lab}}$, path $P_{\lab} + P_{i}$ is an $o$-$d$ path. Thus, if $\rho_{od}^{UB} > \rho(X_{\lab} + X_{P_{i}})$, we have identified a path whose cost is smaller than the current upper bound on the cost of the optimal path, and  $\rho_{od}^{UB}$ can be updated to $ \rho(X_{\lab} + X_{P_{i}})$. Improving the upper bound during the algorithm enables to reduce the number of labels considered, and thus speeds up the algorithm. 

The lists $\ell_{v}$ can be built heuristically during some preprocessing. In our experiments, we built them during the computation of the lower bounds $Z_v^{LB}$ with the \SOTAP{} algorithm.
  
\section{\SRCSPP{}}\label{sec:SRCSPP}

\subsection{Complexity}

The \SRCSPP{} is \NP-hard, since the deterministic case is a special case and is already \NP-hard.

\subsection{Algorithm}

As in Section~\ref{subsec:algo_ssp}, we denote by $X_P$ the random variable $\sum_{a \in P} X_{a}$, where independent copies of $X_{a}$'s appearing several time in the sum are used, and we assume given for each vertex $v$ a random variable $Z_{v}^{LB}$ such that $Z_{v}^{LB} \leqst X_{P}$ for each $v$-$d$ path $P$. We assume moreover given for each vertex $v$ the optimal cost $\pi_v$ of an unconstrained $v$-$d$ path. The costs $(\pi_v)_{v\in V}$ can be efficiently computed using Dijkstra's algorithm. \\

The labeling algorithm for the \SRCSPP{} goes as follows. A label $\lab$ is a triple $(v_{\lab}, Y_{\lab}, c_{\lab})$ where $v_{\lab}$ is a vertex, $Y_{\lab}$ is a random variable, and $c_{\lambda} \in \mathbb{R}$ is a cost. Labels are stored in a queue $L$. Initially, $L$ contains a unique label $\lab_{o} = (o,0,0)$, and $c_{od}^{UB} =1 + \sum_{a\in A}c_{a}$. The algorithm ends when $L$ is empty. While $L$ is not empty, the following operations are repeated:
\begin{itemize}
\item[$\bullet$] Extract a label $\lab$ of $\labList$. 
\item[$\bullet$] If $v_{\lab}=d$  and if the inequalities $\riskMeas(Y_{\lab})\leq \riskMeas_0$ and $c_{\lambda} < c_{od}^{UB}$ hold: update $c_{od}^{UB}$ to $c_{\lab}$.
\item[$\bullet$] Otherwise:
If $\riskMeas(Y_{\lab}+Z_v^{LB}) \leq \riskMeas_0$ and $c_{\lab}+\pi_{v_{\lab}} < c_{od}^{UB} $, {\em expand} label $\lab$: for each arc $(v_{\lab},v)$ in $\delta^+(v_{\lab})$, add a new label $(v,Y_{\lab}+X_{(v_{\lab},v)}, c_{\lab} + c_{(v_{\lab},v)})$ to $\labList$. 
\end{itemize}

In the following theorem, we use the notation ``$\sup \rho > \rho_0$'', which means that there exists a random variable $Y$ such that $\rho(Y) > \rho_0$.

\begin{theo}\label{theo:SRCSPP}
Suppose that for each arc $a$ we have $c_a\neq 0$ or $\P(X_a\neq 0)> 0$ and that $\sup \rho > \rho_{0}$. If $\riskMeas$ is consistent with the usual stochastic order, then the algorithm described above terminates after a finite number of iterations and at the end, either $c_{od}^{UB} = 1 +\sum_{a\in A}c_{a} $ and the \SRCSPP{} admits no feasible solutions, or $c_{od}^{UB}$ is finite and equal to the optimal value of a solution of the \SRCSPP.
\end{theo}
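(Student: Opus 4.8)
The plan is to follow, almost verbatim, the structure of the proof of Theorem~\ref{theo:stoBackward}, adding the bookkeeping needed for the deterministic costs and the resource constraint. First I would attach to each label $\lab$ the $o$-$v_\lab$ walk $P_\lab$ obtained by concatenating the arcs of the successive expansions that produced $\lab$ (with $P_{\lab_o}$ the trivial walk on $o$), and record that $\lab\mapsto P_\lab$ is injective, that $Y_\lab\stackrel{(d)}{=}X_{P_\lab}$, and that $c_\lab=\sum_{a\in P_\lab}c_a$. I would also observe once and for all that $c_{od}^{UB}$ is non-increasing and that whenever it is updated to a finite value $c_\lab$ the walk $P_\lab$ is an $o$-$d$ walk with $\riskMeas(X_{P_\lab})\leq\riskMeas_0$; deleting its cycles increases neither its cost (the $c_a$ are nonnegative) nor its risk ($\riskMeas$ is consistent with the usual stochastic order and removing arcs lowers $X_{P_\lab}$ for that order), so it yields a \emph{feasible} elementary path of cost at most $c_\lab$. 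Hence $c_{od}^{UB}$ is at all times either its initial value $1+\sum_{a\in A}c_a$ or an upper bound on the optimum, and it is never updated if no feasible solution exists; in that case the algorithm therefore ends with $c_{od}^{UB}=1+\sum_{a\in A}c_a$, which is the first alternative of the statement.

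The main obstacle is termination, and this is where the argument genuinely departs from the one for Theorem~\ref{theo:stoBackward}. A label $\lab$ is expanded only if $c_\lab+\pi_{v_\lab}<c_{od}^{UB}\leq 1+\sum_{a\in A}c_a$, hence $\sum_{a\in P_\lab}c_a<1+\sum_{a\in A}c_a$; writing $c_{\min}^+=\min\{c_a:\,a\in A,\ c_a>0\}$ (if that set is empty, every arc has positive cost and the length of $P_\lab$ is already capped), this bounds the number of positive-cost arc occurrences in $P_\lab$ by the constant $(1+\sum_{a\in A}c_a)/c_{\min}^+$. The delicate point is to bound, in addition, the lengths of the maximal cost-free subwalks of $P_\lab$, and for this I would combine the risk test with the claim established inside the proof of Theorem~\ref{theo:stoBackward}. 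Put $X^\star=\bigmeet_{a\in A:\,c_a=0}X_a$; since $F_{X^\star}(0)=\max_{a:\,c_a=0}\P(X_a=0)<1$ by hypothesis, $\P(X^\star\neq0)>0$, and since $\sup\riskMeas>\riskMeas_0$ we may fix a random variable $Y$ with $\riskMeas(Y)>\riskMeas_0$. The claim then provides an integer $n$ with $Y\leqst S_n^\star$, where $S_n^\star$ is the sum of $n$ independent copies of $X^\star$. If some cost-free subwalk $S$ of $P_\lab$ had at least $n$ arcs, then, each such arc satisfying $X_a\geq_{st}X^\star$ and the additional nonnegative copies only increasing the sum for the stochastic order, we would get $X_S\geq_{st}S_n^\star\geq_{st}Y$, hence $Y_\lab+Z_{v_\lab}^{LB}\geq_{st}X_S\geq_{st}Y$ and $\riskMeas(Y_\lab+Z_{v_\lab}^{LB})\geq\riskMeas(Y)>\riskMeas_0$, so $\lab$ would not have been expanded. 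Every expanded $P_\lab$ is therefore made of a bounded number of cost-free subwalks of bounded length joined by a bounded number of positive-cost arcs, so its length is at most some constant $N$; by injectivity of $\lab\mapsto P_\lab$ and finiteness of the set of walks of length $\leq N$ starting at $o$, only finitely many labels are expanded, which gives termination.

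For correctness I would reproduce the ``longest realized prefix'' argument of the proof of Theorem~\ref{theo:stoBackward}. Assume a feasible solution exists; it suffices to show that the final $c_{od}^{UB}$ is at most $\sum_{a\in P}c_a$ for every feasible elementary $o$-$d$ path $P$, because together with the observation of the first paragraph (every finite value of $c_{od}^{UB}$ is at least the optimum) this forces equality, and taking $P$ optimal then makes the final value finite, hence strictly below $1+\sum_{a\in A}c_a$ -- the second alternative. Fix such a $P$, let $P_1$ be the longest prefix of $P$ that equals $P_\lab$ for some label (the trivial walk on $o$ qualifies), call that label $\labB$, let $v$ be its endpoint and $P_2$ the remaining $v$-$d$ part of $P$; by maximality, $\labB$ was not expanded. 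If $v=d$ then $P_1=P$ and $\riskMeas(Y_\labB)=\riskMeas(X_P)\leq\riskMeas_0$, so extracting $\labB$ leaves $c_{od}^{UB}\leq c_\labB=\sum_{a\in P}c_a$. If $v\neq d$, the expansion test that failed cannot be the risk one, since $X_{P_2}\geq_{st}Z_v^{LB}$ would then give $\riskMeas(X_P)\geq\riskMeas(Y_\labB+Z_v^{LB})>\riskMeas_0$, contradicting feasibility of $P$; hence it is the cost one, $c_\labB+\pi_v\geq c_{od}^{UB}$ at that instant, and since $c_{od}^{UB}$ is non-increasing and $\pi_v\leq\sum_{a\in P_2}c_a$, the final $c_{od}^{UB}$ is at most $c_\labB+\pi_v\leq\sum_{a\in P}c_a$. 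This completes the plan.
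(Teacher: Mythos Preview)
Your proof is correct and carries out exactly what the paper indicates (it omits the proof, saying only that it is ``very similar'' to that of Theorem~\ref{theo:stoBackward}): you reuse the claim proved there and the longest-realized-prefix argument for correctness, and you handle the new termination difficulty by combining the cost test (bounding the number of positive-cost arc occurrences) with the risk test and the hypothesis $\sup\riskMeas>\riskMeas_0$ (bounding the cost-free part). One cosmetic slip: the parenthetical after the definition of $c_{\min}^+$ is inverted---if $\{a:c_a>0\}$ is empty then every arc has \emph{zero} cost (and the risk test alone caps the length), not positive cost; also, you could simplify slightly by bounding the total number of zero-cost arc occurrences directly rather than the length of each maximal cost-free subwalk, but either version works.
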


In the case when $\sup \rho \leq \rho_0$, any path is feasible, and the optimal solution of the \SRCSPP{} is the unconstrained shortest path, which can be computed with Dijkstra's algorithm.


As noted in Section~\ref{subsec:algo_ssp}, this algorithm can easily be adapted to return a shortest path itself, and to deal with arcs $a$ such that $c_a=0$ and $X_a=0$ by the same kind as the one presented in Remark~\ref{re:Xa0}.

We do not give the proof of Theorem \ref{theo:SRCSPP} since it is very similar to the one of Theorem~\ref{theo:stoBackward}.  We nevertheless explain three ideas on which the algorithm relies. They are very similar to the ones used in Section~\ref{subsec:algo_ssp} for the \SRCSPP. Again, a label $\lab$ corresponds  to some $o$-$v_{\lab}$ path $P_\lab$ that the algorithm tries to expand in an optimal path
 
The first idea is the following. If an $o$-$v$ path $P$ is a subpath of a feasible $o$-$d$ path, then 
$$\riskMeas\left({X_{P}+Z_{v}^{LB}}\right) \leq \riskMeas_{0}.$$
This inequality provides thus a condition that allows to discard partial path that cannot be completed into feasible paths.

The second idea is the following. Assume that $c_{od}^{UB}$ is an upper bound on the optimal cost of the \SRCSPP. Suppose that $c_{od}^{UB}$ is not tight. If $P$ is an $o$-$v$ path subpath of an optimal solution, then
$$c_P+\pi_v < c_{od}^{UB}.$$ It can be used to discard partial paths that cannot be completed into optimal paths.

The third idea is the following. Each time a feasible $o$-$d$ path $P$ satisfying $c_P < c_{od}^{UB}$ is encountered, we have met a better feasible path and $c_{od}^{UB}$ can be updated to $c_P$.

\subsection{Speeding up the algorithm} The same techniques as the ones described in Section \ref{sec:SSPP} can be used to speed up the algorithm.
\subsubsection{Keys}The cost $c_{\lambda} + \pi_{v_{\lambda}}$,  can be used as a key for the element $v_{\lambda}$. It represents a lower bound on the best cost that could be obtained with a path extending $P_{\lab}$.

\subsubsection{Upper bounds on the complete path} 

Compared to the \SSPP{} of Section~\ref{sec:SSPP}, the only difference in the context  of  the \SRCSPP{}
is that a list of pairs $(X_{P_{i}},c_{P_{i}})$ must be stored for each vertex $v$ instead of a list of sole random variables $X_{P_{i}}$.

\section{Experimental results} \label{sec:experiments}

The algorithms have been coded in \texttt{C++}.  The priority queue was implemented using the data-structures \texttt{map} and \texttt{multimap} of the \texttt{C++} standard library. Experiments were carried out on a MacBook Pro with a 2,5 GHz Intel Core i5 processor and 4 Go 
RAM.  Fast Convolution products were performed using the \texttt{BSD} licensed Fast Fourier Transform library \texttt{KissFFT} \citep{kissFFT}. 

\subsection{Instances}
The algorithms were tested on square grid networks of various sizes and distribution types. The origin is the upper left corner of the grid and the destination is the lower right corner of the grid.  A rough description of these instances is available on Table~\ref{tab:instanceDescription}. The name of each instance indicates the width of the grid, and the distribution. Three types of distributions have been considered:  randomly generated generic distributions, lognormal distributions with randomly generated parameters, and gamma distributions with different sizes. 

Instances can be found on the following webpage:
\begin{quotation}
\texttt{http://cermics.enpc.fr/$\sim$parmenta/shortest\_path/}
\end{quotation}

The remaining of this subsection is devoted to additional information regarding the way the instances were generated.

\subsubsection{Distributions of the $X_a$'s}
We describe the way the distributions of arc travel times $X_a$ are built. For all three distributions, the minimum value $X_a$ can take is uniformly drawn at random between $0$ and $50$. We denote this minimum by $t_0$. 

The generic distributions are then built as follows. The size of the support is then drawn uniformly at random between $1$ and $2t_0$. The quantity $\P(X_a=t_0+t)$ is set to  $\frac{r_t}{\sum_t r_t}$, where $r_t$ is generated by selecting at random some prefixed intervals, in which we draw uniformly at random the value of $r_t$. The prefixed intervals are chosen in a way that enforces strong variations in the value of the variances among the arcs.

Regarding the lognormal (resp. gamma) distributions, a maximum value for the mean $M$ is first selected. For the lognormal distributions, M is set to $2t_0$, except for instance g100Ll, for which it is set to $4t_0$. For the gamma distributions, we set $M=10$. Then, a mean $\mu$ is uniformly generated between $1$ and $M$, and a standard deviation $\sigma^2$ is uniformly generated in $[M-\mu, 2M-\mu]$. The quantity $\P(X_a=t_0+t)$ is set to  $\frac{r_t}{\sum_t r_t}$, where $r_t$ is the density of a lognormal (resp. gamma) distributions of mean $\mu$ and variance $\sigma^2$ at time $t$. When $r_t$ becomes smaller than some threshold $\varepsilon$, it is set to $0$. It ensures a finite-size support.

Since we are expecting the complexity of the operations on distributions to be correlated to the size of paths distributions, we provide in Table~\ref{tab:instanceDescription} the size $\ell$ of the \SOTAP{} solution distribution support at the origin.

\subsubsection{Risk measures}
We made the experiments with two risk measures: $\P(\cdot \geq\tau)$ and $CVaR_{\beta}$. 

For the first one, we set $\tau$ to be equal to $\min\{t:\, F_{Z_o}(t)\geq p\}$ for some parameter $p$, where $Z_{o}$ is the solution of the \SOTAP{} at the origin. For the \SSPP{}, $p$ was chosen  in $\{0.5,0.8,0.95\}$. For the \SRCSPP{}, $p$ was set to $0.95$.

We did so because if $\tau$ is too small, then $\P(X_{P}\geq\tau) = 1$ for each path $P$ from origin to destination, which implies that all the solutions of the \SSPP{} have the same solutions, and that the \SRCSPP{} is infeasible. On the contrary, when $\tau$ is too large, all elementary origin-destination paths have the same risk measure $0$, which implies that they are all optimal solutions of the \SSPP{}, and that the \SRCSPP{} is practically an unconstrained deterministic shortest path problem. 

For the second risk measure, we made the experiments for the \SSPP{} with $\beta\in\{0.01, 0.05, 0.25\}$. The values $0.01$ and $0.05$ are usual values of the parameter.  While the value $0.25$ is higher than what is usually used for the $CVaR$, we think that in the context of the \SSPP{}, such a value corresponds to the expected travel time of the worst day in a week, which sounds reasonable when a commuter searches an optimal path. For the \SRCSPP{}, $\beta$ was set to $0.05$. 

\subsubsection{Additional parameters in the case of the \SRCSPP{}}
The \SRCSPP{} requires two additional parameters: the resource constraint $\riskMeas_{0}$ and the arc costs $c_a$. We set $\rho_{0}$ to be equal to $\alpha \rho(Z_{o}) + (1-\alpha)\rho(X_{Q})$ for some $\alpha \in [0,1]$. We use $\rho(Z_{o})$ and $\rho(X_{Q})$, where $Z_{o}$ is again the solution of the \SOTAP{} at the origin, and where $Q$ is the optimal solution of the unconstrained deterministic shortest path problem. For each instance, we solved the \SRCSPP{} for $\alpha\in\{0.02,0.1,0.5\}$.  We did so in order to have feasible but yet nontrivial solutions. The costs $c_a$ were generated randomly, using a uniform law between $1$ and twice the smallest $t$ such that $\P(X_{a}=t) >0$

\begin{table}
\begin{tabular}{|lrrrl|}
\hline
Instance & $|V|$ & $|A|$ & $\ell$  &  Dist. type\\
\hline
g10R &  100 &  360 &  191 & Generic\\ 
g40R &  1600 &  6240 &  953& Generic\\ 
g100R &  10000 &  39600 &  2449& Generic\\ 
g10Ls &  100 &  360 &  138& Lognormal\\ 
g40Ls &  1600 &  6240 &  389& Lognormal\\ 
g100Ls &  10000 &  39600 &  797& Lognormal\\ 
g100Ll &  10000 &  39600 &  2091& Lognormal - long \\ 
g10G &  100 &  360 &  157& Gamma\\ 
g40G &  1600 &  6240 &  538& Gamma\\ 
g100G & 10000 &  39600 & 1301 & Gamma \\

\hline 
\end{tabular}
\caption{Grid instances description.}
\label{tab:instanceDescription}
\end{table}

\subsection{\SOTAP{}}
Table~\ref{tab:SOTA} describes the performance of the \SOTAP{} algorithm. ``Upd.'' is the number of updates operations, which can be compared to the number of arcs, and ``Exp.'' is the number of expansion of vertices, which should be compared to the number of vertices in the graph. First note that the algorithm is able to solve the \SOTAP{} on instances with $10'000$ vertices in between $1$ and $10$ seconds. On all the instances we tested, the number of expansions was smaller than $3.3|V|$, and often close to $|V|$, which is the number of expansions needed to check that a solution is a feasible solution of the \SOTAP{}. Finally, the ratio of the number of expansions divided by the number of vertices in the instance increases slowly with the size of the instances, which allows to presume that the performance of the algorithm will remain high on larger instances. The limiting factor in our experiments was the memory available.

\begin{table}
\begin{tabular}{|lrrr|}
\hline
Instance & Upd. & Exp. & CPU time (s) \\ 
\hline
g10R &  398 &  111 &  0.0037 \\ 
g40R &  14060 &  3598 &  0.3944  \\ 
g100R &  103397 &  26095 &  6.0776  \\ 
g10Ls &  403 &  113 &  0.0061 \\ 
g40Ls &  11074 &  2838 &  0.2712  \\ 
g100Ls &  80291 &  20271 &  3.1419  \\ 
g100Ll &  129797 &  32764 &  14.4251  \\ 
g10G &  430 &  121 &  0.0053 \\ 
g40G &  13696 &  3513 &  0.2853 \\ 
g100G & 107802 & 27214 & 4.1233 \\ 

\hline 
\end{tabular}
\caption{\SOTAP{} algorithm performances.}
\label{tab:SOTA}
\end{table}

\begin{re}
Note that \SOTAP{} algorithm CPU times to derive bounds to solve \SSPP{} and \SRCSPP{} given in Tables~\ref{tab:SSP_TAU} to~\ref{tab:SRCSP_CVaR} are larger than the those of Table~\ref{tab:SOTA}. Indeed, this is due to the computation of non-dominated path distributions in order to obtain upper bounds as explained in Section~\ref{sec:speedUB}.
\end{re}

\subsection{\SSPP{}}
Table~\ref{tab:SSP_TAU} presents our numerical results when $\riskMeas(\cdot) = \P(\cdot \geq \tau)$ is used, and Table~\ref{tab:SSP_CVAR} when $\riskMeas(\cdot) = CVaR_{\beta}(\cdot)$. In Table~\ref{tab:SSP_TAU}, the value of the parameter $p$ used to choose $\tau$ has been displayed.
The next columns provide the parameters $\tau$ or $\beta$, and the cost of the lower bound obtained by the solution of \SOTAP{} (SOTA) at the origin and the upper bound obtained by applying $\rho(\cdot)$ to the optimal solution of the deterministic shortest path with expectation as cost.  The performance of the \SOTAP{} algorithm used as preprocessing is then given. The next columns indicates the number of labels treated and expanded by the algorithm. Finally, the next columns provide the CPU time for the label algorithm, the cost of the optimal solution, and the total CPU time, which is roughly equal the sum of the CPU time of the \SOTAP{} algorithm and the CPU time of the labeling algorithm. The additional time is due to memory allocation. 

We note that the algorithm is able to solve \SSPP{} for all instances in less that one minute. The most time-consuming phase is always the solution of \SOTAP{} done in preprocessing. The label algorithm runs on most instances in less than one second. The number of labels expanded by the \SSPP{} turns out to be small, which indicates that the quality of the lower bounds provided by the \SOTAP{} algorithm is good. No clear correlations appear between the parameter $\tau$ or $\beta$ and the performance of the algorithm. The cost of the solution of \SOTAP{}, the cost of the optimal solution of \SSPP{}, and the cost of the heuristic solution obtained by taking the path with minimal expectation are within a small range.

We remark that when parameter $p$ is set to be equal to $0.5$, the path with minimum expected length often also minimizes $\P(X\geq\tau)$. This sounds reasonable, because when $p = 0.5$, we expect a path with small $\P(X\geq\tau)$ to have a small median and thus a small expectation.

\begin{scriptsize}
\begin{table}
\begin{tabular}{|l|rrrr|r|rrr|rr|}
\hline
Instance   & $p$ & $\tau $ & SOTA & ESP & SOTA & $\lab$ & $\lab$ & SSPP & Opt. sol. & Total \\
  &&& $\P(\cdot\geq \tau)$  & $\P(\cdot \geq \tau)$ & CPU (s) & treat. & exp. & CPU (s) & $\P(\cdot \geq \tau)$  & CPU (s) \\
 \hline
 g40R &    0.500 & 1773 &  0.495 &  0.498 &  0.788 & 304 & 79 &  0.081 &  0.498 &  0.878 \\
g40R &    0.800 & 1817 &  0.197 &  0.218 &  0.772 & 304 & 82 &  0.055 &  0.203 &  0.836 \\
g40R &    0.950 & 1855 &  0.049 &  0.074 &  0.784 & 304 & 83 &  0.059 &  0.051 &  0.852 \\
\hline g100R &   0.500 & 4492 &  0.497 &  0.502 &  12.153 & 794 & 201 &  0.484 &  0.502 &  12.695 \\
g100R &   0.800 & 4565 &  0.197 &  0.203 &  12.110 & 1466 & 370 &  0.913 &  0.200 &  13.080 \\
g100R &   0.950 & 4630 &  0.049 &  0.057 &  12.729 & 794 & 214 &  0.498 &  0.050 &  13.287 \\
\hline g40Ls &  0.500 & 1242 &  0.488 &  0.502 &  0.550 & 304 & 79 &  0.031 &  0.502 &  0.588 \\
g40Ls &   0.800 & 1268 &  0.199 &  0.217 &  0.544 & 452 & 117 &  0.048 &  0.217 &  0.598 \\
g40Ls &   0.950 & 1297 &  0.047 &  0.061 &  0.544 & 304 & 81 &  0.030 &  0.051 &  0.582 \\
\hline g100Ls &  0.500 & 2959 &  0.493 &  0.496 &  6.203 & 794 & 201 &  0.132 &  0.496 &  6.378 \\
g100Ls &  0.800 & 2999 &  0.197 &  0.203 &  6.254 & 794 & 201 &  0.140 &  0.203 &  6.436 \\
g100Ls &  0.950 & 3040 &  0.049 &  0.053 &  6.322 & 1082 & 273 &  0.176 &  0.050 &  6.542 \\
\hline g100Ll &  0.500 & 3971 &  0.498 &  0.503 &  30.104 & 1765 & 444 &  1.339 &  0.503 &  31.512 \\
g100Ll &  0.800 & 4070 &  0.200 &  0.204 &  30.186 & 1260 & 320 &  0.936 &  0.203 &  31.191 \\
g100Ll &  0.950 & 4170 &  0.050 &  0.053 &  30.156 & 1577 & 407 &  1.162 &  0.051 &  31.387 \\ 
\hline g40G &   0.500 & 658 &  0.485 &  0.485 &  0.539 & 301 & 79 &  0.052 &  0.485 &  0.599 \\
g40G &   0.800 & 680 &  0.190 &  0.192 &  0.535 & 301 & 79 &  0.048 &  0.192 &  0.590 \\
g40G &   0.950 & 701 &  0.050 &  0.052 &  0.539 & 301 & 81 &  0.048 &  0.050 &  0.594 \\
\hline g100G& 0.500& 1708& 0.492& 0.498& 8.702& 2679& 673& 0.928& 0.498& 9.679 \\ 
g100G& 0.800& 1746& 0.195& 0.204& 8.601& 8256& 2079& 2.906& 0.202& 11.555 \\
g100G& 0.950& 1782& 0.049& 0.058& 8.624& 8436& 2139& 2.992& 0.052& 11.666 \\
 \hline 
\end{tabular}
\caption{Performance of \SSPP{} algorithm with $\P(\cdot \geq \tau)$ as risk measure}
\label{tab:SSP_TAU}
\end{table}
\end{scriptsize}

\begin{outdent}
\begin{scriptsize}
\begin{table}
\begin{tabular}{|l|rrr|r|rrr|rr|}
\hline
Instance   & $\beta$ & SOTA & ESP & SOTA& $\lab$ & $\lab$ & SSPP & Opt. sol. & Total \\
  && $CVaR$ & $CVaR$& CPU (s) & treat. & exp. & CPU (s)& $CVaR$ & CPU (s)\\
 \hline
 g40R&  0.250&  1837.192&  1845.262&  0.800& 260& 83&  0.052&  1838.437&  0.860\\
g40R&  0.050&  1873.505&  1889.451&  0.776& 258& 87&  0.053&  1874.776&  0.837\\
g40R&  0.010&  1899.092&  1922.075&  0.775& 253& 94&  0.055&  1901.063&  0.839\\
\hline g100R&  0.250&  4599.412&  4603.778&  12.871& 784& 211&  0.495&  4600.425&  13.424\\
g100R&  0.050&  4663.326&  4672.600&  12.481& 1076& 303&  0.715&  4664.403&  13.254\\
g100R&  0.010&  4711.355&  4724.452&  12.156& 1076& 315&  0.715&  4712.107&  12.929\\
\hline g40Ls&  0.250&  1283.573&  1287.441&  0.545& 282& 81&  0.030&  1285.220&  0.582\\
g40Ls&  0.050&  1312.482&  1319.501&  0.543& 273& 86&  0.031&  1313.445&  0.581\\
g40Ls&  0.010&  1336.079&  1345.904&  0.543& 259& 93&  0.037&  1336.710&  0.587\\
\hline g100Ls&  0.250&  3021.004&  3022.701&  6.274& 731& 203&  0.125&  3021.967&  6.441\\
g100Ls&  0.050&  3062.436&  3065.510&  6.323& 934& 276&  0.179&  3063.256&  6.560\\
g100Ls&  0.010&  3095.327&  3099.824&  6.265& 612& 205&  0.116&  3095.587&  6.425\\
\hline g100Ll&  0.250&  4124.031&  4126.803&  30.856& 1269& 335&  0.959&  4125.520&  31.885\\
g100Ll&  0.050&  4224.099&  4228.733&  30.691& 755& 211&  0.565&  4225.118&  31.325\\
g100Ll&  0.010&  4302.636&  4309.323&  30.432& 737& 214&  0.556&  4303.215&  31.057  \\ 
\hline g40G&  0.250&  691.308&  691.725&  0.554& 272& 80&  0.055&  691.523&  0.619\\
g40G&  0.050&  713.333&  714.331&  0.551& 252& 82&  0.045&  713.517&  0.604\\
g40G&  0.010&  730.689&  732.304&  0.555& 224& 85&  0.044&  730.875&  0.607\\
\hline g100G& 0.250& 1765.151& 1768.065& 8.640& 8051& 2083& 2.904& 1766.616& 11.592 \\
g100G& 0.050& 1801.330& 1807.005& 8.651& 12803& 3301& 4.625& 1803.171& 13.326 \\ 
g100G& 0.010& 1829.399& 1837.500& 8.605& 19187& 4952& 6.975& 1831.498& 15.628 \\ 

 \hline 
\end{tabular}
\caption{Performance of \SSPP{} algorithm with $CVaR$ as risk measure}
\label{tab:SSP_CVAR}
\end{table}
\end{scriptsize}
\end{outdent}

\subsection{\SRCSPP{}}

Table~\ref{tab:SRCSP_Tau} contains the numerical experiments for the \SRCSPP{} with $\riskMeas(\cdot) = \P(\cdot \geq \tau)$ as risk measure, and Table~\ref{tab:SRCSP_CVaR} contains the results with $\riskMeas(\cdot) = CVaR_{\beta}(\cdot)$.  Both tables contain the tradeoff $\alpha$ between the risk measure of the solution of \SOTAP{}, the cost of the optimal unconstrained path and the resource constraint $\rho_{0}$. The next column indicates the time consumed by the \SOTAP{} (SOTA) algorithm used in preprocessing.  The three next columns provide the number of labels treated and expanded by the \SRCSPP{} (SRCSP) labeling algorithm and the CPU time of the labeling algorithm. The labeling algorithm was stopped after five minutes. The column $\riskMeas(P)$ is the value of the risk measure evaluated on the solution found, LB is a lower bound on its cost (only displayed when the optimal solution was not found), and $c_P$ is its cost. When a cost equal to $\infty$ is displayed, it means that the algorithm was not able to find any feasible solution. One of the instances was moreover proved to be not feasible (g10G of Table~\ref{tab:SRCSP_CVaR}). Finally, we provide the total CPU time, which is roughly equal to the sum of the preprocessing CPU time and the labeling algorithm CPU time.

The performance of the algorithm is strongly influenced by the parameter $\alpha$, which models the hardness of the resource constraint.  A small $\alpha$ corresponds to a hard resource constraint and a large $\alpha$ corresponds to an easy resource constraint. The algorithm performance relies on the use of stochastic lower bound to cut potentially infeasible path. Thus, when the resource constraint is hard, which corresponds to a small $\alpha$, the number of feasible paths is small, and the algorithm is efficient. The labeling algorithm was able to solve all the instances with $\P(\cdot \geq \tau)$ and most instances with $\riskMeas(\cdot) = CVaR_{\beta}(\cdot)$ when  $\alpha = 0.02$ in the allocated time. On the contrary, when $\alpha$ is larger, the number of feasible paths is much bigger, making the algorithm less efficient.

\begin{scriptsize}
\begin{table}
\begin{outdent}
\begin{tabular}{|l|rrrr|r|rrr|rrrr|r|}
\hline Inst.  & $\tau$ &$\alpha$ & SP &  $\rho_{0}$ &SOTA  & $\lab$ & $\lab$ & SRCSP & $\riskMeas(P)$& LB & $c_{P}$ & gap  & Total  \\
 & & & cost & & CPU (s)&  treat. & exp. &  CPU (s)& & &  & \% & CPU (s)\\
\hline
g10R & 498 &  0.02 &  134.106 &  0.069 &  0.008 & 67 & 19 &  0.003 &  0.050 &   &  320.702 & 0.0  &  0.012 \\
g10R & 498 &  0.10 &  134.106 &  0.145 &  0.008 & 79 & 22 &  0.003 &  0.070 &   &  219.858 &  0.0 &  0.012\\
g10R & 498 &  0.50 &  134.106 &  0.525 &  0.007 & 318 & 82 &  0.013 &  0.258 &   &  202.102 & 0.0 &  0.021\\
\hline g40R & 1855 &  0.02 &  584.055 &  0.068 &  0.733 & 1827 & 462 &  0.319 &  0.062 &   &  1283.772 & 0.0 &  1.063\\
g40R & 1855 &  0.10 &  584.055 &  0.144 &  0.724 & 75110 & 19177 &  13.486 &  0.123 &   &  1205.852 & 0.0 &  14.220\\
g40R &  1855 &  0.50 &  584.055 &  0.525 &  0.726 &  1355571 &  429418 &  300.000 &  0.479  &  856.668 &  1248.012 &  45.7 &  300.735 \\
\hline g10Ls & 418 &  0.02 &  251.844 &  0.069 &  0.012 & 111 & 30 &  0.005 &  0.059 &   &  421.254 & 0.0 &  0.018\\
g10Ls & 418 &  0.10 &  251.844 &  0.145 &  0.012 & 360 & 94 &  0.016 &  0.146 &   &  400.475 & 0.0 &  0.029\\
g10Ls & 418 &  0.50 &  251.844 &  0.525 &  0.013 & 1662 & 427 &  0.078 &  0.516 &   &  371.532 & 0.0 &  0.092\\
\hline g40Ls & 1297 &  0.02 &  833.451 &  0.066 &  0.533 & 1147 & 290 &  0.103 &  0.061 &   &  1541.857 & 0.0 &  0.646\\
g40Ls & 1297 &  0.10 &  833.451 &  0.143 &  0.531 & 119204 & 30376 &  10.991 &  0.120 &   &  1490.817 & 0.0 &  11.533\\
g40Ls &  1297 &  0.50 &  833.451 &  0.524 &  0.547 &  2823376 &  861653 &  300.000 &  0.480  &  1220.566 &  1433.313 &  17.4 &  300.556 \\
\hline g10G & 185 &  0.02 &  213.270 &  0.062 &  0.009 & 117 & 31 &  0.005 &  0.058 &   &  297.191 & 0.0 &  0.014\\
g10G & 185 &  0.10 &  213.270 &  0.116 &  0.010 & 175 & 46 &  0.007 &  0.076 &   &  269.665 & 0.0 &  0.018\\
g10G & 185 &  0.50 &  213.270 &  0.386 &  0.008 & 149 & 45 &  0.007 &  0.243 &   &  231.777 & 0.0 &  0.016\\
\hline g40G & 701 &  0.02 &  892.028 &  0.069 &  0.535 & 10528 & 2635 &  1.758 &  0.064 &   &  1577.910 & 0.0 &  2.305\\
g40G & 701 &  0.10 &  892.028 &  0.145 &  0.527 & 1231773 & 307952 &  207.613 &  0.143 &   &  1473.075 & 0.0 &  208.151 \\
g40G & 701 &  0.50 &  892.028 &  0.525 &  0.553 & 1468578 & 551686 &  300.000 & - &  999.764 &  $\infty$ &  $\infty$ &  300.560\\ 

\hline
\end{tabular}
\caption{Performance of \SRCSPP{} algorithm with $\P(\cdot \geq \tau)$ as risk measure, with $p = 0.95$ }
\label{tab:SRCSP_Tau}
\end{outdent}
\end{table}
\end{scriptsize}

\begin{scriptsize}
\begin{table}
\begin{outdent}
\begin{tabular}{|l|rrr|r|rrr|rrrr|r|}
\hline Inst.  &$\alpha$ & SP &  $\rho_{0}$ &SOTA  & $\lab$ & $\lab$ & SRCSP & $\riskMeas(P)$& LB & $c_{P}$ & gap  & Total  \\
 & & cost & & CPU (s)&  treat. & exp. &  CPU (s)& & &  & \% & CPU (s)\\
\hline
g10R &  0.02 &  134.106 &  509.474 &  0.007 & 67 & 19 &  0.002 &  505.811 &   &  219.858 & 0.0   &  0.010\\
g10R &  0.10 &  134.106 &  533.220 &  0.007 & 382 & 98 &  0.016 &  513.358 &   &  202.102 & 0.0  &  0.025\\
g10R &  0.50 &  134.106 &  651.949 &  0.008 & 612 & 244 &  0.035 &  645.537 &   &  150.467 &  0.0 &  0.044\\
\hline g40R &  0.02 &  584.055 &  1896.338 &  0.765 & 28474 & 7235 &  4.913 &  1889.475 &  &  1205.852 &  0.0 &  5.691\\
g40R &  0.10 &  584.055 &  1987.669&  0.727  &  1256561 &  427909 &  300.000 &  1987.238  &  793.711 &  1050.238 &  32.3 &  300.736 \\
g40R &  0.50 &  584.055 &  2444.322&  0.731  &  670441 &  514178 &  300.000 &  2439.891  &  613.004 &  742.644 &  21.1 &  300.741 \\
\hline g10Ls &  0.02 &  251.844 &  432.411 &  0.012 & 111 & 30 &  0.005 &  429.405 &   &  421.254 & 0.0  &  0.018\\
g10Ls &  0.10 &  251.844 &  453.657 &  0.013 & 1050 & 268 &  0.049 &  445.882 &   &  374.680 &  0.0 &  0.064\\
g10Ls &  0.50 &  251.844 &  559.885 &  0.012 & 1357 & 353 &  0.076 &  513.708 &   &  283.292 &  0.0 &  0.089\\
\hline g40Ls &  0.02 &  833.451 &  1334.621 &  0.521 & 94912 & 24051 &  8.796 &  1333.675 &   &  1490.817 & 0.0  &  9.328\\
g40Ls &  0.10 &  833.451 &  1423.180&  0.558  &  2520108 &  1025174 &  300.000 &  1417.246  &  1028.961 &  1336.761 &  29.9 &  300.569 \\
g40Ls &  0.50 &  833.451 &  1865.975&  0.538  &  1202558 &  836889 &  300.001 &  1860.527  &  880.748 &  1007.203 &  14.4 &  300.548 \\
\hline g10G &  0.02 &  213.270 &  192.069 &  0.009 & 101 & 26 &  0.004 &   &   & INFEAS. &   & 0.014\\
g10G &  0.10 &  213.270 &  195.236 &  0.009 & 118 & 31 &  0.005 &  194.351 &   &  269.665 & 0.0  &  0.015\\
g10G &  0.50 &  213.270 &  211.069 &  0.009 & 173 & 50 &  0.008 &  203.028 &   &  231.777 &  0.0 &  0.018\\
\hline g40G &  0.02 &  892.028 &  719.158 &  0.547 & 18664 & 4669 &  3.133 &  718.039 &   &  1562.206 &  0.0 &  3.690\\
g40G &  0.10 &  892.028 &  742.458 &  0.562 &  1631484 &  505573 &  300.000 &  -  &  1121.193 &  $\infty$ &  $\infty$ &  300.573 \\
g40G &  0.50 &  892.028 &  858.958 &  0.573 &  839101 &  632986 &  300.000 &  857.531  &  923.304 &  1163.971 &  26.1 &  300.583 \\

\hline
\end{tabular}
\caption{Performance of \SRCSPP{} algorithm with $CVaR_{\beta}$ with $\beta = 0.05$ as risk measure }
\label{tab:SRCSP_CVaR}
\end{outdent}
\end{table}
\end{scriptsize}


\bigskip


\bibliographystyle{authordate1}

\bibliography{biblioShortestPath}

\begin{thebibliography}{}

\bibitem[\protect\citename{Bast {\em et~al.\ }\relax, }2014]{bast2014route}
Bast, Hannah, Delling, Daniel, Goldberg, Andrew, M{\"u}ller-Hannemann,
  Matthias, Pajor, Thomas, Sanders, Peter, Wagner, Dorothea, \& Werneck,
  Renato. 2014.
\newblock {\em Route Planning in Transportation Networks}.

\bibitem[\protect\citename{Bogerding, }2013]{kissFFT}
Bogerding, Mark. 2013.
\newblock {\em KissFFT library}.

\bibitem[\protect\citename{Boland {\em et~al.\ }\relax,
  }2006]{boland2006accelerated}
Boland, Natashia, Dethridge, John, \& Dumitrescu, Irina. 2006.
\newblock Accelerated label setting algorithms for the elementary resource
  constrained shortest path problem.
\newblock {\em Operations Research Letters}, {\bf 34}(1), 58--68.

\bibitem[\protect\citename{Chen \& Ji, }2005]{chen2005path}
Chen, Anthony, \& Ji, Zhaowang. 2005.
\newblock Path finding under uncertainty.
\newblock {\em Journal of advanced transportation}, {\bf 39}(1), 19--37.

\bibitem[\protect\citename{Chen {\em et~al.\ }\relax, }2013]{chen2013finding}
Chen, Bi~Yu, Lam, William~HK, Sumalee, Agachai, Li, Qingquan, Shao, Hu, \&
  Fang, Zhixiang. 2013.
\newblock Finding reliable shortest paths in road networks under uncertainty.
\newblock {\em Networks and spatial economics}, {\bf 13}(2), 123--148.

\bibitem[\protect\citename{Dumitrescu \& Boland, }2003]{dumitrescu2003improved}
Dumitrescu, Irina, \& Boland, Natashia. 2003.
\newblock Improved preprocessing, labeling and scaling algorithms for the
  Weight-Constrained Shortest Path Problem.
\newblock {\em Networks}, {\bf 42}(3), 135--153.

\bibitem[\protect\citename{Ehrgott, }2005]{ehrgott2005multicriteria}
Ehrgott, Matthias. 2005.
\newblock {\em Multicriteria optimization}.
\newblock  Vol. 2.
\newblock Springer.

\bibitem[\protect\citename{Eiger {\em et~al.\ }\relax, }1985]{eiger1985path}
Eiger, Amir, Mirchandani, Pitu~B, \& Soroush, Hossein. 1985.
\newblock Path preferences and optimal paths in probabilistic networks.
\newblock {\em Transportation Science}, {\bf 19}(1), 75--84.

\bibitem[\protect\citename{Fan \& Nie, }2006]{fan2006optimal}
Fan, Yueyue, \& Nie, Yu. 2006.
\newblock Optimal routing for maximizing the travel time reliability.
\newblock {\em Networks and Spatial Economics}, {\bf 6}(3-4), 333--344.

\bibitem[\protect\citename{Fan {\em et~al.\ }\relax, }2005]{fan2005arriving}
Fan, YY, Kalaba, RE, \& Moore~II, JE. 2005.
\newblock Arriving on time.
\newblock {\em Journal of Optimization Theory and Applications}, {\bf 127}(3),
  497--513.

\bibitem[\protect\citename{Frank, }1969]{frank1969shortest}
Frank, H. 1969.
\newblock Shortest paths in probabilistic graphs.
\newblock {\em Operations Research}, {\bf 17}(4), 583--599.

\bibitem[\protect\citename{Fu, }2001]{fu2001adaptive}
Fu, Liping. 2001.
\newblock An adaptive routing algorithm for in-vehicle route guidance systems
  with real-time information.
\newblock {\em Transportation Research Part B: Methodological}, {\bf 35}(8),
  749--765.

\bibitem[\protect\citename{Fu \& Rilett, }1998]{fu1998expected}
Fu, Liping, \& Rilett, Larry~R. 1998.
\newblock Expected shortest paths in dynamic and stochastic traffic networks.
\newblock {\em Transportation Research Part B: Methodological}, {\bf 32}(7),
  499--516.

\bibitem[\protect\citename{Hall, }1986]{hall1986fastest}
Hall, Randolph~W. 1986.
\newblock The fastest path through a network with random time-dependent travel
  times.
\newblock {\em Transportation science}, {\bf 20}(3), 182--188.

\bibitem[\protect\citename{Handler \& Zang, }1980]{handler1980dual}
Handler, Gabriel~Y, \& Zang, Israel. 1980.
\newblock A dual algorithm for the constrained shortest path problem.
\newblock {\em Networks}, {\bf 10}(4), 293--309.

\bibitem[\protect\citename{Irnich \& Desaulniers, }2005]{irnich2005shortest}
Irnich, Stefan, \& Desaulniers, Guy. 2005.
\newblock {\em Shortest path problems with resource constraints}.
\newblock Springer.

\bibitem[\protect\citename{Kosuch \& Lisser, }2010]{kosuch2010stochastic}
Kosuch, Stefanie, \& Lisser, Abdel. 2010.
\newblock Stochastic shortest path problem with delay excess penalty.
\newblock {\em Electronic Notes in Discrete Mathematics}, {\bf 36}, 511--518.

\bibitem[\protect\citename{Loui, }1983]{loui1983optimal}
Loui, Ronald~Prescott. 1983.
\newblock Optimal paths in graphs with stochastic or multidimensional weights.
\newblock {\em Communications of the ACM}, {\bf 26}(9), 670--676.

\bibitem[\protect\citename{Miller-Hooks \& Mahmassani, }2003]{miller2003path}
Miller-Hooks, Elise, \& Mahmassani, Hani. 2003.
\newblock Path comparisons for a priori and time-adaptive decisions in
  stochastic, time-varying networks.
\newblock {\em European Journal of Operational Research}, {\bf 146}(1), 67--82.

\bibitem[\protect\citename{Miller-Hooks \& Mahmassani, }1998]{miller1998least}
Miller-Hooks, Elise~D, \& Mahmassani, Hani~S. 1998.
\newblock Least possible time paths in stochastic, time-varying networks.
\newblock {\em Computers \& operations research}, {\bf 25}(12), 1107--1125.

\bibitem[\protect\citename{Miller-Hooks, }1997]{miller1997optimal}
Miller-Hooks, Elise~Deborah. 1997.
\newblock {\em Optimal routing in time-varying, stochastic networks: algorithms
  and implementations}.
\newblock The University of Texas at Austin.

\bibitem[\protect\citename{Mirchandani, }1976]{mirchandani1976shortest}
Mirchandani, Pitu~B. 1976.
\newblock Shortest distance and reliability of probabilistic networks.
\newblock {\em Computers \& Operations Research}, {\bf 3}(4), 347--355.

\bibitem[\protect\citename{M{\"u}ller \& Stoyan, }2002]{muller2002comparison}
M{\"u}ller, Alfred, \& Stoyan, Dietrich. 2002.
\newblock {\em Comparison methods for stochastic models and risks}.
\newblock  Vol. 389.
\newblock Wiley.

\bibitem[\protect\citename{Murthy \& Sarkar, }1996]{murthy1996relaxation}
Murthy, Ishwar, \& Sarkar, Sumit. 1996.
\newblock A relaxation-based pruning technique for a class of stochastic
  shortest path problems.
\newblock {\em Transportation Science}, {\bf 30}(3), 220--236.

\bibitem[\protect\citename{Murthy \& Sarkar, }1998]{murthy1998stochastic}
Murthy, Ishwar, \& Sarkar, Sumit. 1998.
\newblock Stochastic shortest path problems with piecewise-linear concave
  utility functions.
\newblock {\em Management Science}, {\bf 44}(11-part-2), S125--S136.

\bibitem[\protect\citename{Nie \& Fan, }2006]{nie2006arriving}
Nie, Yu, \& Fan, Yueyue. 2006.
\newblock Arriving-on-time problem: discrete algorithm that ensures
  convergence.
\newblock {\em Transportation Research Record: Journal of the Transportation
  Research Board}, {\bf 1964}(1), 193--200.

\bibitem[\protect\citename{Nie \& Wu, }2009]{nie2009shortest}
Nie, Yu~Marco, \& Wu, Xing. 2009.
\newblock Shortest path problem considering on-time arrival probability.
\newblock {\em Transportation Research Part B: Methodological}, {\bf 43}(6),
  597--613.

\bibitem[\protect\citename{Nie {\em et~al.\ }\relax, }2012]{nie2012optimal}
Nie, Yu~Marco, Wu, Xing, \& Homem-de Mello, Tito. 2012.
\newblock Optimal path problems with second-order stochastic dominance
  constraints.
\newblock {\em Networks and Spatial Economics}, {\bf 12}(4), 561--587.

\bibitem[\protect\citename{Niknami {\em et~al.\ }\relax,
  }2014]{niknami2014tractable}
Niknami, Mehrdad, Samaranayake, Samitha, \& Bayen, Alexandre. 2014.
\newblock Tractable Pathfinding for the Stochastic On-Time Arrival Problem.
\newblock {\em arXiv preprint arXiv:1408.4490}.

\bibitem[\protect\citename{Nikolova, }2010]{nikolova2010high}
Nikolova, Evdokia. 2010.
\newblock High-performance heuristics for optimization in stochastic traffic
  engineering problems.
\newblock {\em Pages  352--360 of:} {\em Large-Scale Scientific Computing}.
\newblock Springer.

\bibitem[\protect\citename{Nikolova {\em et~al.\ }\relax,
  }2006]{nikolova2006stochastic}
Nikolova, Evdokia, Kelner, Jonathan~A, Brand, Matthew, \& Mitzenmacher,
  Michael. 2006.
\newblock Stochastic shortest paths via quasi-convex maximization.
\newblock {\em Pages  552--563 of:} {\em Algorithms--ESA 2006}.
\newblock Springer.

\bibitem[\protect\citename{Raith \& Ehrgott, }2009]{raith2009comparison}
Raith, Andrea, \& Ehrgott, Matthias. 2009.
\newblock A comparison of solution strategies for biobjective shortest path
  problems.
\newblock {\em Computers \& Operations Research}, {\bf 36}(4), 1299--1331.

\bibitem[\protect\citename{Sabran {\em et~al.\ }\relax,
  }2014]{sabran2014precomputation}
Sabran, Guillaume, Samaranayake, Samitha, \& Bayen, Alexandre~M. 2014.
\newblock Precomputation techniques for the stochastic on-time arrival problem.
\newblock {\em Pages  138--146 of:} {\em ALENEX}.
\newblock SIAM.

\bibitem[\protect\citename{Samaranayake {\em et~al.\ }\relax,
  }2012]{samaranayake2012tractable}
Samaranayake, Samitha, Blandin, Sebastien, \& Bayen, A. 2012.
\newblock A tractable class of algorithms for reliable routing in stochastic
  networks.
\newblock {\em Transportation Research Part C: Emerging Technologies}, {\bf
  20}(1), 199--217.

\bibitem[\protect\citename{Sivakumar \& Batta, }1994]{sivakumar1994variance}
Sivakumar, Raj~A, \& Batta, Rajan. 1994.
\newblock The variance-constrained shortest path problem.
\newblock {\em Transportation Science}, {\bf 28}(4), 309--316.

\bibitem[\protect\citename{Tarapata, }2007]{tarapata2007selected}
Tarapata, Zbigniew. 2007.
\newblock Selected multicriteria shortest path problems: An analysis of
  complexity, models and adaptation of standard algorithms.
\newblock {\em International Journal of Applied Mathematics and Computer
  Science}, {\bf 17}(2), 269--287.

\end{thebibliography}

\end{document}